\documentclass[acmsmall,screen,nonacm]{acmart}

\usepackage{amsmath,amssymb,amsthm,mathtools,booktabs}
\usepackage{tabularx}
\usepackage{microtype}
\microtypesetup{expansion=false}

\newtheorem{definition}{Definition}
\newtheorem{theorem}{Theorem}
\newtheorem{proposition}{Proposition}

\newtheorem{example}{Example}
\newtheorem{assumption}{Assumption}
\newtheorem{corollary}{Corollary}
\newtheorem{remark}{Remark}

\newcommand{\AP}{\mathcal{AP}}

\title{Protected Cores Are Not Enough: Certifying AI-Proposed Revisions of Temporal Specifications}
\author{Ruggero Lanotte}
\affiliation{
  \institution{University of Insubria}
  \city{Varese}
  \country{Italy}
}
\setcopyright{none}
\renewcommand\footnotetextcopyrightpermission[1]{}
\ccsdesc[500]{Software and its engineering~Formal software verification}
\ccsdesc[300]{Theory of computation~Logic and verification}
\keywords{runtime verification, temporal specifications, specification evolution, confidence sequences, AI proposer}

\begin{document}

\begin{abstract}
Runtime monitoring traditionally evaluates a specification that is fixed before execution or externally modified when requirements change. In learning-enabled and data-intensive systems, however, the temporal relationships represented by a specification may themselves evolve. Allowing an AI component to directly replace a formal specification is unsafe: it may overfit transient behavior, weaken protected requirements, or activate statistically unsupported revisions.

We introduce an intersymbolic architecture in which an untrusted AI proposer suggests temporal specification revisions and a symbolic governor controls their activation. Two results organize the framework. First, origin-version semantics makes the outcome of each obligation invariant to later revisions. Second, aggregate certification can conceal systematic failures on protected triggers; simultaneous aggregate and core-conditional post-selection certification controls both targets. A structural invariant preserves designer-protected components, and a proposer-independent lifetime error bound supports repeated activation decisions. The statistical bound concerns the predictable means of completed certification samples; interpreting it as future operational validity requires an additional stability assumption.
Controlled synthetic experiments use a frozen supervised AI proposer to illustrate the masked-core failure at one decision and across repeated governed revisions. The proposer is a supervised regressor trained offline on synthetic tasks and frozen before use; it ranks candidates by predicted aggregate margin and never observes the protected-trigger success rate, so the masked-core failure arises from optimising the aggregate rather than from an adversary constructed by hand.
\end{abstract}

\maketitle

\section{Introduction}
\label{sec:introduction}

Modern data-intensive systems continuously generate timestamped observations.
Runtime verification provides a principled mechanism for determining whether
such executions satisfy temporal requirements~\cite{Bauer2011Runtime,MalerNickovic2004}.
A typical property may require that an event $A$ be followed by an event $B$
within a bounded interval,
\begin{equation}
\Box\left(A\rightarrow\Diamond_{[a,b]}B\right),
\end{equation}
or, under uncertainty, that such a response occur with probability at least
$\lambda$, as in probabilistic temporal formalisms such as
PrSTL~\cite{SadighKapoor2016},
\begin{equation}
A\Rightarrow_{\lambda,[a,b]}B.
\end{equation}

Such requirements are routinely deployed over streams whose generating
process is not stationary. In a clinical-alarm monitor, for instance, a rule
requiring that a desaturation alert be followed by a documented intervention
within a bounded delay may have been specified for a given patient population,
ward staffing level, and sensor configuration. A change in these conditions
may alter the observed delay or the circumstances in which the rule applies.
Persistent violations then require investigation: the monitor alone cannot
distinguish a mismatch caused by the changed operating regime from a genuine
safety degradation. Similar issues arise in service-level monitoring after a
routing change and in industrial supervision after a retooling. In these
settings, some components of a requirement are safety-relevant and must
survive adaptation, while others describe relationships that may need to
follow the process.

The relationship encoded by a specification may evolve in a non-stationary
environment, a setting closely related to concept-drift adaptation in
streaming learning~\cite{Gama2014ConceptDrift}. For example,
\begin{equation}
A\Rightarrow_{0.90,[2,5]}B
\end{equation}
may cease to describe the observed process, while
\begin{equation}
(A\land C)\Rightarrow_{0.92,[3,8]}B
\end{equation}
becomes supported by persistent evidence.

A fixed monitor may then generate systematic violations after a regime
change. Prior work has therefore considered adaptive monitoring and
changing runtime requirements~\cite{Perotti2015NeuralSymbolic,Carwehl2023ChangingRequirements}.
Conversely, a learner allowed to freely rewrite its own specification may
eliminate violations by weakening the property. Our governor therefore
preserves designer-protected trigger and response components and keeps every
parameter tuple inside a fixed admissible envelope. Structural preservation
alone is insufficient: a high aggregate success rate can conceal systematic
failures on protected triggers. We therefore also require core-conditional
statistical certification.

Proposal, certification, and activation are distinct. At a selection index
$n^\star$, an untrusted proposer supplies candidate revisions using the
observed prefix. The governor checks structural admissibility and collects
completed obligations generated for the selected candidate after selection.
Only if the statistical conditions hold at a later decision index $m$ can the
candidate become active for subsequent events:
\[
\mathsf{Propose}(\mathbf e_{\le n^\star},\psi_j)
\ni\psi^\star
\quad\xrightarrow{\;\text{post-selection certification}\;}
\mathsf{Activate}_{m+1}(\psi^\star),
\qquad m>n^\star.
\]
The proposer may be neural, statistical, rule-based, or LLM-based; its output
is treated as untrusted.

The problem studied in this paper is to decide whether an AI-proposed
revision $\psi$ can replace the active specification $\psi_j$ while ensuring:
(i) deterministic preservation of protected trigger and response structure;
(ii) admissibility of the rewrite; (iii) post-selection statistical support
for the candidate both overall and on protected triggers;
(iv) non-retroactive semantics for obligations generated under earlier
versions; and (v) a global bound on unsupported activations over the
evolution history.

The paper has two main results:
\begin{enumerate}
    \item \emph{Origin-version non-retroactivity.} An obligation retains the
    version under which it was generated, so later revisions cannot change
    its outcome. A retroactive active-version interpretation fails by
    counterexample.

    \item \emph{Masked core failure and its certification remedy.} An
    aggregate success rate may remain high while responses on protected
    triggers fail. Simultaneous aggregate and core-conditional post-selection
    bounds address this failure mode.
\end{enumerate}
The two-time semantics and protected-envelope invariant provide the formal
setting for these results. A proposer-independent lifetime theorem supports
the second result by bounding false certifications relative to the
predictable means of the completed samples available at each decision.
Finite-prefix evidence may initiate evolution but does not create a third
activation guarantee.
Claims about obligations generated after activation require a separate
stability assumption.

The technical novelty is therefore not the Hoeffding inequality or error
spending in isolation. It is the certification target and proof structure
forced by specification evolution: the candidate is chosen adaptively,
evidence is indexed by completed post-selection obligations, activation may
occur while earlier versions still own unresolved obligations, and the
protected-trigger subsequence overlaps the aggregate sample. These features
jointly determine what can be certified without assuming correctness of the
AI proposer.

All proofs are collected in Appendix~\ref{app:proofs}.

\section{Related Work and Positioning}

The proposed framework lies at the intersection of runtime verification, temporal specification learning, concept-drift adaptation, and neural-symbolic or intersymbolic AI~\cite{Platzer2024Intersymbolic}; we use
\emph{intersymbolic} in that sense, with a symbolic and a subsymbolic
component interlinked and the symbolic side retaining decision authority.

\paragraph{Runtime verification and changing requirements.}
Runtime verification monitors execution prefixes against temporal properties~\cite{Bauer2011Runtime,MalerNickovic2004}. Adaptive monitoring is not new: Perotti, Garcez, and Boella combine neural-symbolic monitoring with learning-based adaptation of the initial specification~\cite{Perotti2015NeuralSymbolic}, while Carwehl et al. address runtime verification under changing requirements and preserve intermediate monitoring results when properties are adapted~\cite{Carwehl2023ChangingRequirements}. Carwehl et al.\ address the complementary question of how monitor state is
carried across a property change; the present paper fixes no monitor
representation and is concerned instead with whether a revision may become
active at all. Shared monitoring of multiple temporal properties has also been studied by Demir and Ulus~\cite{LoomRV2026}, while SENTIL supplies a probabilistic temporal-logic monitoring tool~\cite{SENTIL2026}. These two papers address monitoring computation or probabilistic checking; our activation rule additionally certifies a protected-trigger subpopulation under adaptive proposal. Our setting differs in that candidate structural revisions are generated by an explicitly untrusted proposer and become active only after a separate symbolic and statistical certification step.

\paragraph{Learning temporal specifications.}
Temporal-logic learning and specification mining infer formulas or parameters from observed traces. Bombara and Belta introduced online learning of STL formulas from incrementally arriving labelled signals~\cite{BombaraBelta2018} and later developed a decision-tree framework covering both offline and online STL learning~\cite{BombaraBelta2021}. Probabilistic Signal Temporal Logic (PrSTL) represents uncertain temporal properties whose probabilistic predicates can evolve as beliefs are updated from data~\cite{SadighKapoor2016}. We therefore do not claim novelty for online formula learning, probabilistic temporal specifications, or parameter adaptation in isolation. Instead, the learned component in our architecture is only a proposer: activation is governed independently of the proposal mechanism.

\paragraph{Concept drift and sequential certification.}
Concept-drift adaptation studies changes in data generating processes and online model adaptation~\cite{Gama2014ConceptDrift}. Our notion of \emph{semantic drift} is narrower: it concerns a persistent change in the validity of an explicitly represented temporal relation. For repeated statistical decisions, ordinary fixed-time intervals are insufficient under optional stopping; confidence sequences provide time-uniform coverage over unbounded horizons~\cite{HowardEtAl2021}, while betting-based constructions provide practical anytime-valid confidence sequences for bounded means~\cite{WaudbySmithRamdas2024}. We use this sequential-inference perspective to certify a sequence of specification revisions rather than a single model estimate.
Classical post-selection inference instead adjusts inference for a
data-dependent model-selection step~\cite{BerkEtAl2013PostSelection}. Our
construction separates the selected prefix from fresh prospective outcomes
and conditions on the selection-time information; it therefore addresses a
different sequential object, but the same selection-versus-inference
distinction is essential.

\paragraph{Certified self-modification of AI systems.}
The Statistical G\"odel Machine (SGM) admits recursive edits through
statistical tests and a global risk budget~\cite{Wu2025SGM}. Self-Evolving
Agents (SEA) uses anytime-valid certificates to gate changes to a steering
adapter and a versioned harness around a frozen base model~\cite{Sengupta2026SEA}.
These works establish that certification and cumulative error control are
relevant to self-modifying AI. Our object of revision is a temporal
specification rather than an agent implementation; origin-version
obligations and the protected-trigger conditional target require separate
semantics and evidence.

\paragraph{Robust probabilistic shielding.}
Recent shielding methods combine learned uncertainty sets with formal
probabilistic guarantees. Galesloot, Rhemrev, and Jansen construct shields
for safe offline reinforcement learning from a fixed data set and obtain a
high-confidence policy-safety guarantee~\cite{Galesloot2026RobustOfflineShielding}.
Hamel-De le Court et al. synthesize sound and optimal shields for robust
MDPs, enforcing an LTL probability threshold against worst-case transition
probabilities and combining the construction with PAC transition
learning~\cite{HamelDeLeCourt2026RobustShielding}. These results already
cover probabilistic automata, data-derived model sets, and uniform robust
safety over the admissible dynamics. Our problem is different: the object
being changed is the temporal specification itself, rather than the policy
or its permitted actions. Consequently, the formal obligations are to keep
pending obligations attached to their origin version, preserve a
designer-fixed syntactic core, and certify both the aggregate and the
protected-trigger conditional targets of an adaptively selected revision.
The present lifetime bound ranges over specification activations; it is not
a replacement for model-robust shielding.

\paragraph{Monitoring deployed models and subpopulation guarantees.}
A related line monitors a deployed model sequentially and separates shifts
that degrade performance from benign ones, either by sequential testing on a
tracked risk~\cite{PodkopaevRamdas2022} or by weighted conformal test
martingales that adapt to mild shifts while detecting and diagnosing harmful
ones~\cite{Prinster2025WATCH}. Our aggregate versus protected-group
decomposition is related but not learned: the partition is the
designer-fixed protected trigger, it is immutable under revision, and it is
the same predicate that the deterministic invariant protects. Certifying a
fixed subpopulation separately from the aggregate is likewise studied
outside runtime verification~\cite{CherianCandes2024}; what is specific to
our setting is its interaction with adaptively selected candidates and with
repeated activations under a single lifetime budget.

Our distinguishing principle is
\[
\boxed{\text{AI proposal}+\text{symbolic admissibility}+\text{statistical certification}+\text{controlled activation}.}
\]
The AI component proposes candidate revisions but is not part of the trusted computing base. The symbolic governor decides activation and maintains guarantees across an adaptively generated sequence of specification versions.

We do not claim a new concentration inequality, a new temporal-logic
learning algorithm, or a new shielding construction. The contribution is
the certification object induced by adaptive specification revision: a
candidate chosen from the observed prefix is monitored prospectively before
activation, its completed outcomes are indexed by an outcome filtration,
and pending operational obligations remain attached to their origin
versions. Within that semantics, deterministic preservation of the
designer-fixed core and simultaneous aggregate/core evidence compose into a
lifetime guarantee over actual specification activations.

\section{Model and Temporal Semantics}

\subsection{Timed streams}

Let $\AP$ be a finite set of atomic propositions. A timed observation is a pair
$
e_n=(t_n,\nu_n),
$
where $n\in\mathbb N$ is the event index,
$t_n\in\mathbb R_{\ge0}$ is a timestamp, and
$\nu_n\subseteq\AP$ is the set of atomic propositions observed at time $t_n$.

\begin{definition}[Timed stream]
A timed stream is a finite or infinite sequence
$
\mathbf{e}=e_0e_1e_2\cdots
$
of timed observations with non-decreasing timestamps. For a finite stream
$\mathbf{e}=e_0\cdots e_{N-1}$, we require
$
t_n\le t_{n+1}$ and $n=0,\ldots,N-2,
$
whereas for an infinite stream the same condition holds for every
$n\in\mathbb N$.
In addition, every infinite timed stream is required to be non-Zeno:
$
\lim_{n\to\infty} t_n = \infty.
$
Thus every finite temporal horizon is eventually exceeded at a finite event
index.
\end{definition}

We write $\mathcal{E}^\omega$ for the set of infinite timed streams and
$\mathcal{E}^\ast$ for the set of finite timed streams. For an infinite stream
$\mathbf{e}\in\mathcal{E}^\omega$, the finite prefix available at event index $n$ is
$
\mathbf{e}_{\le n}=e_0e_1\cdots e_n\in\mathcal{E}^\ast.
$
We use $n$ exclusively for event indices.

\subsection{Temporal language}

Following the bounded-time monitoring tradition of metric temporal
logics~\cite{AlurHenzinger1993,MalerNickovic2004,BauerLeuckerSchallhart2011}, we consider the bounded
future-time fragment of MTL generated by
\[
\varphi ::= p
\mid \top \mid \bot
\mid \neg\varphi
\mid \varphi\land\varphi
\mid \varphi\lor\varphi
\mid \Box_{[a,b]}\varphi
\mid \Diamond_{[a,b]}\varphi,
\]
where $p\in\AP$ and
$0\le a\le b<\infty$.
We denote by $\Phi$ the set of formulas generated by this grammar.

The satisfaction relation is defined over infinite timed streams
$\mathbf{e}\in\mathcal{E}^\omega$ and event indices $n\in\mathbb N$.
The constants satisfy $\mathbf e,n\models\top$ and
$\mathbf e,n\not\models\bot$ for every stream and event index. Moreover
\begin{align*}
\mathbf{e},n\models p
&\iff p\in\nu_n,\\
\mathbf{e},n\models\neg\varphi
&\iff \mathbf{e},n\not\models\varphi,\\
\mathbf{e},n\models\varphi_1\land\varphi_2
&\iff
\mathbf{e},n\models\varphi_1
\text{ and }
\mathbf{e},n\models\varphi_2,\\
\mathbf{e},n\models\varphi_1\lor\varphi_2
&\iff
\mathbf{e},n\models\varphi_1
\text{ or }
\mathbf{e},n\models\varphi_2,\\
\mathbf{e},n\models\Box_{[a,b]}\varphi
&\iff
\forall m\ge n:
\bigl(t_m-t_n\in[a,b]\bigr)
\text{ implies }
\mathbf{e},m\models\varphi,\\
\mathbf{e},n\models\Diamond_{[a,b]}\varphi
&\iff
\exists m\ge n:
\bigl(t_m-t_n\in[a,b]\bigr)
\land
\mathbf{e},m\models\varphi.
\end{align*}

Finite streams $\mathbf{e}_{\le n}\in\mathcal{E}^\ast$ represent the
prefixes available to the online monitor and are not assigned the above
two-valued satisfaction relation.

Since all temporal intervals are bounded, every formula
$\varphi\in\Phi$ has a finite structural horizon. Set $\operatorname{hzn}(\top)=\operatorname{hzn}(\bot)=0$.  Define
\begin{align*}
\operatorname{hzn}(p) &= 0,\\
\operatorname{hzn}(\neg\varphi)
&= \operatorname{hzn}(\varphi),\\
\operatorname{hzn}(\varphi_1\circ\varphi_2)
&=
\max\{
\operatorname{hzn}(\varphi_1),
\operatorname{hzn}(\varphi_2)
\},
\qquad
\circ\in\{\land,\lor\},\\
\operatorname{hzn}(\Box_{[a,b]}\varphi)
&=
b+\operatorname{hzn}(\varphi),\\
\operatorname{hzn}(\Diamond_{[a,b]}\varphi)
&=
b+\operatorname{hzn}(\varphi).
\end{align*}

Hence, for every $\varphi\in\Phi$, the truth value of
$\mathbf e,n\models\varphi$ is determined once the timed stream has
progressed beyond
$t_n+\operatorname{hzn}(\varphi)$.

\subsection{Probabilistic trigger--response semantics}
\label{sec:specification}

In addition to trace-level formulas defined above, we consider probabilistic
trigger--response specifications.

\paragraph{Admissible parameter envelope.}
The designer fixes a measurable admissible parameter envelope
\[
\mathcal A_{\mathrm{par}}
\subseteq
\left\{
(\lambda,a,b)\in[0,1]\times\mathbb R_{\ge0}^2
\;\middle|\;
0\le a\le b<\infty
\right\}.
\]
The set $\mathcal A_{\mathrm{par}}$ specifies the parameter
combinations permitted by the designer. Each admissible triple
$(\lambda,a,b)$ determines a minimum success threshold $\lambda$
and a bounded response window $[a,b]$. Every specification version
must use parameters belonging to $\mathcal A_{\mathrm{par}}$, and
parameter revisions may select only triples within this set.

The envelope is fixed independently of the AI proposer and remains
unchanged throughout specification evolution. No closure, convexity,
or ordering assumption is imposed on $\mathcal A_{\mathrm{par}}$.
When desired, the designer may additionally impose a
strict lower threshold
$
\lambda_{\min}
:=
\inf\{\lambda:(\lambda,a,b)\in\mathcal A_{\mathrm{par}}\}>0.
$

\begin{definition}[Probabilistic trigger--response specification]
\label{def:trigger-response-specification}
A probabilistic trigger--response specification is the tuple
\[
\psi
=
\bigl(
\varphi_1^{\mathrm{core}},
\varphi_1^{\mathrm{adapt}},
\varphi_2^{\mathrm{core}},
\varphi_2^{\mathrm{adapt}},
\lambda,a,b
\bigr),
\]
where
$
\varphi_1^{\mathrm{core}},
\varphi_1^{\mathrm{adapt}},
\varphi_2^{\mathrm{core}},
\varphi_2^{\mathrm{adapt}}
\in\Phi$ and $(\lambda,a,b)\in\mathcal A_{\mathrm{par}}$.
Define the derived trigger and response by
$\operatorname{trig}(\psi)
=
\varphi_1^{\mathrm{core}}\lor\varphi_1^{\mathrm{adapt}}$ and $\operatorname{resp}(\psi)
=
\varphi_2^{\mathrm{core}}\land\varphi_2^{\mathrm{adapt}}$.
We use the abbreviation
\[
\psi
=
\operatorname{trig}(\psi)
\Rightarrow_{\lambda,[a,b]}
\operatorname{resp}(\psi)
\]
whenever the component decomposition is not itself under discussion, and
write $\lambda_\psi=\lambda$.

A specification without a protected core is represented by the convention
$
\varphi_1^{\mathrm{core}}=\bot$ and $
\varphi_2^{\mathrm{core}}=\top.
$
Let $
\Psi
\subseteq
\Phi^4\times\mathcal A_{\mathrm{par}}
$
denote the set of all such tuples. Equip $\Phi$ with a fixed
$\sigma$-algebra $\Sigma_\Phi$ and $\mathcal A_{\mathrm{par}}$ with its
chosen measurable structure (for instance, the trace Borel $\sigma$-algebra
when the envelope is Borel). We equip
$\Psi$
with the corresponding product/subspace $\sigma$-algebra
$
\Sigma_\Psi
=
\left.
\left(
\Sigma_\Phi^{\otimes4}
\otimes
\Sigma_{\mathcal A_{\mathrm{par}}}
\right)
\right|_{\Psi}.
$

Equality in $\Psi$ is componentwise tuple equality. Consequently, two
specifications with semantically equivalent derived trigger and response but
different core/adaptive decompositions are distinct elements of $\Psi$.
\end{definition}

A trigger--response obligation for $\psi$ is generated at event
index $n$ if
$\mathbf e,n\models\operatorname{trig}(\psi)$.
The obligation is successful if there exists an event index
$m\ge n$ such that
$t_m-t_n\in[a,b]$
and
$\mathbf e,m\models\operatorname{resp}(\psi)$.
The success condition is defined with respect to the infinite stream
and need not be decidable when the obligation is generated.

Define the observation horizon by
\[
H(\psi)
=
\max\!\left\{
\operatorname{hzn}(\operatorname{trig}(\psi)),
\;b+\operatorname{hzn}(\operatorname{resp}(\psi))
\right\}.
\]

The response window ends at $t_n+b$. The truth of the trigger at $n$
is guaranteed to be observable once the observed stream has advanced
beyond
$t_n+\operatorname{hzn}(\operatorname{trig}(\psi))$.
The success or failure of the response condition is guaranteed to be
observable once the stream has advanced beyond
$t_n+b+\operatorname{hzn}(\operatorname{resp}(\psi))$.

We adopt the conservative convention that an obligation is
\emph{completed} at the first observed event whose timestamp exceeds
$t_n+H(\psi)$. At completion, both the truth of its trigger and, if
the trigger holds, its success or failure are observable from the
finite prefix. An outcome may become observable earlier, but is
recorded as completed only at this boundary.

Given a timed stream
$\mathbf{e}=e_0e_1e_2\cdots$,
$e_n=(t_n,\nu_n)$,
the self-evolving monitor maintains a sequence of specification versions
$
\psi_0,\psi_1,\psi_2,\ldots .
$
We use $j$ exclusively for specification-version indices.

Let
$0=n_0<n_1<n_2<\cdots$
be the specification activation indices, with $n_j\in\mathbb N$ denoting
the event index at which version $\psi_j$ becomes active. At event index
$n$, the active specification is the unique version $\psi_j$ such that
$n_j\le n<n_{j+1}$.
If the version sequence is finite and $\psi_j$ is its last version, we use
the convention $n_{j+1}=\infty$.

\begin{example}[Specification activation indices]
\label{ex:specification-activation-indices}

Consider the timed stream
$
\mathbf{e}
=
e_0e_1e_2e_3e_4e_5\cdots,
$
with
\[
e_0=(0,\nu_0),\;
e_1=(2,\nu_1),\;
e_2=(4,\nu_2),\;
e_3=(7,\nu_3),\;
e_4=(9,\nu_4),\;
e_5=(12,\nu_5).
\]
Suppose that the specification activation indices are
$n_0=0$, $n_1=3$
and
$n_2=5$.

Then $\psi_0$ is active at event indices $0,1,2$,
$\psi_1$ is active at event indices $3,4$, and $\psi_2$ becomes
active at event index $5$.
Thus, $j$ indexes specification versions, whereas $n_j$ is an event index
identifying where the corresponding specification version becomes active
on the timed stream. In particular,
$e_5=(12,\nu_5)$,
so $\psi_2$ becomes active at event index $5$, corresponding to timestamp
$t_5=12$, immediately before $e_5$ is processed.
\end{example}

The evolution of the specification sequence is governed by an admissible
revision relation and an AI proposer.

\paragraph{Symbolic governor}
The admissible revision relation
$
\mathcal R\subseteq\Psi\times\Psi
$
is defined a priori by the system designer and specifies which structural
changes between specification versions are permitted.

\paragraph{Untrusted AI proposer}
Let $\Xi$ denote the space of random seeds used by the proposer.
At runtime, the possibly randomized proposal mechanism is represented by
\[
\mathsf{Propose}:
\mathcal{E}^\ast\times\Psi\times\Xi
\rightarrow
2^\Psi.
\]
Its internal randomness is included in the selection-time information
introduced in Section~\ref{sec:certification}.

When $\psi_j$ is active at event index $n$, the proposer computes
from the observed prefix $\mathbf{e}_{\le n}$ the set of candidate revisions
\[
\mathcal P_{n,j}
=
\mathsf{Propose}
(\mathbf{e}_{\le n},\psi_j,\xi_n^{\mathrm{prop}})
\subseteq\Psi,
\]
where $\xi_n^{\mathrm{prop}}\in\Xi$ is the random seed used
at that invocation.

The proposer returns decomposed specification tuples in $\Psi$.
The governor, rather than the proposer, enforces preservation of the
incumbent's protected components through $\mathcal R$.
Hence, $\mathcal R$ is a fixed design constraint, whereas
$\mathcal P_{n,j}$ is a data-dependent candidate set generated online.

\paragraph{Specification evolution.}

We distinguish two transition notions.

\begin{definition}[Specification evolution transitions]
\label{def:evolution-transitions}
For an active specification $\psi_j$ at event index $n$, we write
\[
\psi_j
\xrightarrow[n]{\mathcal R,\mathcal P}
\psi
\qquad\text{iff}\qquad
\psi\in\mathcal P_{n,j}
\quad\text{and}\quad
\psi_j\mathrel{\mathcal R}\psi.
\]
The relation
$\psi_j\xrightarrow[n]{\mathcal R,\mathcal P}\psi$
means that $\psi$ is proposed at event index $n$ and is an admissible
revision of $\psi_j$. This transition does not modify the active
specification.

We write
\[
\psi_j
\xRightarrow[n_{j+1}]{}
\psi_{j+1}
\]
to denote the subsequent certified activation of the next specification
version at event index $n_{j+1}$. We refer to this second transition as a
\emph{specification-version transition}. Its formal certification condition
is defined later in Section~\ref{sec:certification}.
\end{definition}

The certified-activation rule of Section~\ref{sec:certification} implies the
following fact:
\[
\psi_j
\xRightarrow[n_{j+1}]{}
\psi_{j+1}
\qquad\text{implies}\qquad
n_j\le n<n_{j+1}
\quad\text{and}\quad
\psi_j
\xrightarrow[n]{\mathcal R,\mathcal P}
\psi_{j+1}.
\]

\begin{example}[Proposed revision and certified activation]
\label{ex:proposed-certified-transition}

Consider again the timed stream and specification activation indices of
Example~\ref{ex:specification-activation-indices}. In particular,
$\psi_0$ is active at event indices $0,1,2$, and $\psi_1$ becomes active at
$n_1=3$.

A candidate that will eventually become $\psi_1$ may be proposed
and selected while $\psi_0$ is still active, for example at event
index $n=0$. Thus,
$\psi_0
\xrightarrow[0]{\mathcal R,\mathcal P}
\psi_1$
does not activate $\psi_1$. It only states that $\psi_1$ has
been proposed at event index $0$ and is an admissible revision
of $\psi_0$.

Suppose that sufficiently many post-selection obligations are
completed by event index $2$ and that the statistical
certification condition is satisfied.
The specification-version transition
$\psi_0
\xRightarrow[3]{}
\psi_1$
then occurs.
\end{example}

We adopt an origin-version semantics. A response obligation generated while
$\psi_j$ is active is evaluated according to the temporal interval and
trace-level formulas of $\psi_j$, even if $\psi_{j+1}$ becomes active before
the obligation is completed. Hence, a revision cannot retroactively change
the meaning of an already generated obligation.
More precisely, the operational version assigned to an origin
index $k$ is the unique $\psi_j$ satisfying $n_j\le k<n_{j+1}$.
This assignment is made at $k$, even if a future-time trigger cannot yet
be decided from $\mathbf e_{\le k}$: the monitor retains that origin and
its version until the trigger status and, when applicable, the obligation
outcome have been resolved. A candidate monitored before activation
produces certification-only observations, not operational obligations
under the candidate for those earlier origins.

The admissible revision relation $\mathcal R$ and the proposer-generated
candidate sets $\mathcal P_{n,j}$ are defined in detail in the following
sections. A proposed admissible candidate becomes the next active
specification only after satisfying the statistical certification conditions
defined in Section~\ref{sec:certification}.

\section{Controlled Specification Evolution: Admissible Revision Classes}
\label{sec:controlled-evolution}

Structural admissibility is determined by $\mathcal R$, whereas
statistical certification and activation are enforced separately
by the symbolic governor.
Consequently,
the structural guarantees below do not depend on the internal architecture,
training procedure, or correctness of the proposer.

The tuple representation of Definition~\ref{def:trigger-response-specification}
separates two designer-protected components from two adaptive components.
The protected components cannot be modified by specification evolution,
whereas either adaptive component may be revised. The disjunction in the
derived trigger ensures that every protected trigger remains covered, while
the conjunction in the derived response ensures that every successful full
response satisfies the protected response component.

\begin{definition}[Admissible revision]
\label{def:admissible-revision}
For any  $\psi,\widetilde\psi\in\Psi$ such that
$\psi\mathrel{\mathcal R}\widetilde\psi$
it holds that:

\begin{enumerate}
    \item the protected components of $\psi$ and $\widetilde\psi$ are unchanged;
    \item at most one of the two adaptive components of $\widetilde\psi$ differs from its
    counterpart in $\psi$;
    \item the parameter tuple of  $\widetilde\psi$ belongs to
    $\mathcal A_{\mathrm{par}}$; and
    \item at least one adaptive component or the parameter tuple of  $\widetilde\psi$ differs
    from its counterpart in $\psi$.
\end{enumerate}
Thus an admissible revision is a pure parameter shift, a trigger-adaptive
replacement, or a response-adaptive replacement, where an adaptive
replacement may be accompanied by a parameter shift.
\end{definition}

Structural admissibility does not impose a strengthening order between
successive versions. Within the designer-declared envelope, the threshold,
temporal interval, and one adaptive component may change in either direction.
Whether a structurally admissible candidate may become active is determined
separately by statistical certification.

\paragraph{Obtaining the decomposition.}
The deterministic invariant and the core-conditional certification of
Section~\ref{sec:certification} both rest on the protected components
$\varphi_1^{\mathrm{core}}$ and $\varphi_2^{\mathrm{core}}$ and on the
envelope $\mathcal A_{\mathrm{par}}$, all of which are supplied by the
designer before deployment and are never revised. The framework does not
infer them, and the guarantees it provides are relative to them: a vacuous
protected trigger ($\varphi_1^{\mathrm{core}}=\bot$) disables core
certification and reduces the invariant to the response core and the
envelope. Three considerations guide the choice in practice. First, the
protected trigger should capture the situations in which the requirement is
safety-relevant rather than the typical ones, since the invariant
guarantees coverage of exactly those; in the clinical-alarm setting of
Section~\ref{sec:introduction} this is the subset of desaturation events
that a clinical protocol classifies as requiring escalation, not all
alerts. Second, the protected response should retain only the conjuncts
whose violation is itself unsafe, because every additional conjunct is a
requirement that no future revision may relax, and an over-specified
response core makes admissible revisions impossible rather than unsafe.
Third, the protected share $w$ induced by
$\varphi_1^{\mathrm{core}}$ determines the cost of core certification:
as quantified in Remark~\ref{rem:core-certification-cost}, collecting a
fixed amount of core-conditional evidence requires approximately $1/w$
aggregate trigger observations. A narrowly defined protected trigger may
isolate the genuinely critical situations, but if those situations are
rare, substantially more aggregate observations are needed. This is an
inherent trade-off between the semantic scope of protection and
certification latency; the trigger should therefore be defined by the
safety requirement, not enlarged or narrowed merely to improve sample
efficiency.

\begin{theorem}[Deterministic protected-envelope invariant]
\label{thm:core-envelope-preserving}
Let $\psi_0
\mathrel{\mathcal R}
\psi_1
\mathrel{\mathcal R}
\cdots
\mathrel{\mathcal R}
\psi_q$
be any finite chain of admissible revisions, and write the protected
components of $\psi_0$ as
$\varphi_1^{\mathrm{core}}$ and
$\varphi_2^{\mathrm{core}}$.
Then, for every $i\in\{0,\ldots,q\}$:
\begin{enumerate}
    \item the protected components of $\psi_i$ are exactly
    $\varphi_1^{\mathrm{core}}$ and
    $\varphi_2^{\mathrm{core}}$;
    \item for every timed stream $\mathbf e$ and event index $n$,
    $\mathbf e,n\models\varphi_1^{\mathrm{core}}$
    implies
    $\mathbf e,n\models\operatorname{trig}(\psi_i)$;
    \item for every timed stream $\mathbf e$ and event index $m$,
    $\mathbf e,m\models\operatorname{resp}(\psi_i)$
    implies
    $\mathbf e,m\models\varphi_2^{\mathrm{core}}$;
    \item the parameter tuple of $\psi_i$ belongs to
    $\mathcal A_{\mathrm{par}}$.
\end{enumerate}
\end{theorem}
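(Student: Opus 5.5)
The plan is to prove item (1) by induction on $i$ along the chain, and then derive items (2)--(4) for each fixed $i$ directly from item (1), the definition of $\Psi$, and the satisfaction clauses for $\lor$ and $\land$.

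\textbf{Item (1).} For $i=0$ it holds by the definition of $\varphi_1^{\mathrm{core}}$ and $\varphi_2^{\mathrm{core}}$ as the protected components of $\psi_0$. For the inductive step, assume $\psi_i$ has protected components exactly $\varphi_1^{\mathrm{core}},\varphi_2^{\mathrm{core}}$ and that $\psi_i\mathrel{\mathcal R}\psi_{i+1}$. Clause~(1) of Definition~\ref{def:admissible-revision} states that the protected components of $\psi_{i+1}$ equal those of $\psi_i$. Since equality in $\Psi$ is componentwise tuple equality, the first and third entries of $\psi_{i+1}$ coincide syntactically with $\varphi_1^{\mathrm{core}}$ and $\varphi_2^{\mathrm{core}}$. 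Transitivity of syntactic equality then closes the induction.

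\textbf{Items (2) and (3).} Write $\psi_i=(\varphi_1^{\mathrm{core}},\alpha_i,\varphi_2^{\mathrm{core}},\beta_i,\lambda_i,a_i,b_i)$, which is possible by item (1). Then $\operatorname{trig}(\psi_i)=\varphi_1^{\mathrm{core}}\lor\alpha_i$ and $\operatorname{resp}(\psi_i)=\varphi_2^{\mathrm{core}}\land\beta_i$.
\begin{itemize}
\item For item (2), fix any $\mathbf e$ and $n$ with $\mathbf e,n\models\varphi_1^{\mathrm{core}}$. The disjunction clause of the satisfaction relation gives $\mathbf e,n\models\varphi_1^{\mathrm{core}}\lor\alpha_i$, whatever $\alpha_i$ is.
\item For item (3), assume $\mathbf e,m\models\varphi_2^{\mathrm{core}}\land\beta_i$. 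The conjunction clause yields $\mathbf e,m\models\varphi_2^{\mathrm{core}}$.
\end{itemize}
Both statements are quantified over infinite streams $\mathbf e\in\mathcal E^\omega$, where the satisfaction relation is defined, so this is the reading to use. The adaptive components are irrelevant, and this is why the proof needs no assumption on what the proposer puts in those slots.

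\textbf{Item (4).} Every element of $\Psi$ has its parameter tuple in $\mathcal A_{\mathrm{par}}$ by Definition~\ref{def:trigger-response-specification}, since $\Psi\subseteq\Phi^4\times\mathcal A_{\mathrm{par}}$. Item (4) therefore holds for every $i$ immediately, including $i=0$. Clause~(3) of Definition~\ref{def:admissible-revision} gives it again for $i\ge1$, so no induction is needed here.

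No step is a genuine obstacle. The only point needing care is in item (1): ``unchanged'' in clause~(1) must be read as syntactic identity of the tuple entries, not semantic equivalence. Semantic equivalence would still give (2) and (3), but not the literal ``exactly'' in (1). I would also note explicitly that clauses~(2) and~(4) of the definition, which limit how many adaptive components change and require that something change, play no role in this proof.
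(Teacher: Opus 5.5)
Your proposal is correct and follows essentially the paper's own proof: the paper inducts along the chain, using clause (1) of Definition~\ref{def:admissible-revision} for the protected components and clause (3) for the envelope, and reads (2) and (3) off the disjunctive derived trigger and the conjunctive derived response. Your observation that item (4) already follows from $\Psi\subseteq\Phi^4\times\mathcal A_{\mathrm{par}}$, so that only item (1) needs induction, is a harmless streamlining of the same argument.
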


Theorem~\ref{thm:core-envelope-preserving} is a deterministic guarantee: it
holds for every admissible revision chain, rather than with probability
$1-\delta$. It is intentionally structural. In particular, the
trigger implication and response implication are pointwise properties;
they do not equate the temporal obligations of versions with different
response windows. Preserving the protected response as a conjunct does not
by itself guarantee that responses to protected triggers succeed with high
probability; the corresponding probabilistic requirement is enforced
separately by the core-conditional certification introduced in
Section~\ref{sec:certification}.

\section{Obligation Outcomes and Regime Semantics}
\label{sec:regime}

\subsection{Response outcomes and regime validity}

For the statistical statements in this section, fix a probability space
$(\Omega,\mathcal F,\mathbb P)$ on which the timed stream $\mathbf e$ is an
$\mathcal E^\omega$-valued random element. We equip $\mathcal E^\omega$ and the auxiliary spaces used by the proposal
and selection mechanisms with fixed $\sigma$-algebras, and equip $\Psi$
with the product/subspace $\sigma$-algebra $\Sigma_\Psi$ defined in
Section~\ref{sec:specification}. We assume
that the timed stream, proposer output, selected candidate, and all other
random objects entering the generated $\sigma$-algebras below are measurable
with respect to their corresponding measurable spaces.
 All response outcomes and
filtrations below are defined on this probability space.

Before defining the binary response outcome, we formalize when a
trigger--response obligation is generated and when its response window has
completed.

\begin{definition}[Completed trigger--response obligation]
\label{def:completed-obligation}
Let
$\psi
=
\varphi_1
\Rightarrow_{\lambda,[a,b]}
\varphi_2$.
\emph{A trigger--response obligation is generated at event index $n$} if
$\mathbf{e},n\models\varphi_1$.
The obligation generated at $n$ is \emph{completed} once its binary outcome
is guaranteed to be observable, namely if there exists an event index
$m\ge n$ such that
$t_m>t_n+H(\psi)$.
\end{definition}

Note that the response window $[t_n+a,t_n+b]$ may close before the obligation
is completed in the sense above, because evaluating $\varphi_2$ may itself
require future observations. As an example for $\varphi_2=\Diamond_{[0,5]}p$ it holds $\operatorname{hzn}(\varphi_2)=5$. Completion does not imply satisfaction;
satisfaction or violation is determined by the binary response variable
defined below. When $H(\psi)=0$, the strict inequality is conservative: it
waits until the stream progresses to a strictly later timestamp even though
a purely propositional zero-delay outcome may already be decidable at $n$.

For every event index $k$, define the binary response indicator
for a fixed version $\psi_j$ by
\[
X_k^{\psi_j}
=
\mathbf 1\!\left[
\exists m\ge k:
 t_m-t_k\in[a_j,b_j]
 \land
 \mathbf e,m\models\operatorname{resp}(\psi_j)
\right].
\]
When $\mathbf e,n\models\varphi_1$, this indicator is the binary outcome of
the obligation generated at $n$: $X_n^\psi=1$ denotes satisfaction and
$X_n^\psi=0$ violation. At runtime this outcome is admitted to monitoring or
certification only after the obligation is completed according to
Definition~\ref{def:completed-obligation}. Defining $X_n^\psi$ on the whole
probability space makes the trigger-conditioned probability used below
well defined. Each trigger generates a distinct obligation outcome even when
response windows overlap.

The following theorem is immediate once origin-version semantics is fixed;
its role is not technical depth but removal of a genuine ambiguity about
pending obligations when a revision occurs before their response windows
have completed.

\begin{theorem}[Origin-version non-retroactivity]
\label{thm:origin-nonretroactivity}
Let an obligation originate at index $k$ under $\psi_j$.
Under origin-version semantics, its trigger status, outcome
$X_k^{\psi_j}$, and completion index
\[
c_{\psi_j}(k)
:=
\min\{m\ge k:t_m>t_k+H(\psi_j)\}
\]
depend only on $(\mathbf e,k,\psi_j)$ and are unaffected by subsequent
revisions.

For every origin index $k$, there exists exactly one $j$ such that
$n_j\le k<n_{j+1}$.
Consequently, the obligation originating at $k$ is assigned
to $\psi_j$, independently of when it completes.
\end{theorem}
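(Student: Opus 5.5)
The plan is to argue directly from the definitions, splitting the statement into two parts: the functional dependence of the three quantities, and the uniqueness of the assigned version.

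\emph{First part.} I fix $k$ and $\psi_j$ and inspect each quantity in turn.

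The trigger status is the truth of $\mathbf e,k\models\operatorname{trig}(\psi_j)$. By the satisfaction relation, this is a function of the infinite stream $\mathbf e$, the index $k$, and the formula $\operatorname{trig}(\psi_j)=\varphi_1^{\mathrm{core}}\lor\varphi_1^{\mathrm{adapt}}$. That formula is a component of the tuple $\psi_j$.

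The outcome $X_k^{\psi_j}$ is defined by an existential over $m\ge k$. This existential involves only the timestamps of $\mathbf e$, the window $[a_j,b_j]$, and the satisfaction of $\operatorname{resp}(\psi_j)$ on $\mathbf e$. Hence it too is a function of $(\mathbf e,k,\psi_j)$.

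The completion index $c_{\psi_j}(k)$ depends only on the timestamps and on $H(\psi_j)$. The horizon $H(\psi_j)$ is computed structurally from $\operatorname{trig}(\psi_j)$, $b_j$ and $\operatorname{resp}(\psi_j)$. The minimum exists because $\mathbf e$ is non-Zeno and $H(\psi_j)<\infty$, since all intervals are bounded.

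The key observation is that none of these definitions mentions the activation indices $n_{j'}$ for $j'>j$, nor the later versions $\psi_{j+1},\psi_{j+2},\ldots$. Any two evolution histories that agree on $(\mathbf e,k,\psi_j)$ therefore yield identical values. This is what "unaffected by subsequent revisions" means. Formally, I would state it as: the three quantities are functions of $(\mathbf e,k,\psi_j)$ alone, so they are invariant under any change of the suffix of the version/activation sequence.

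\emph{Second part.} This is uniqueness of $j$.

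For existence, recall that $n_0=0\le k$ and that the activation indices are strictly increasing naturals. The set $\{j : n_j\le k\}$ is nonempty because it contains $0$. It is finite, with at most $k+1$ elements, because strict increase in $\mathbb N$ gives $n_j\ge j$. Let $j$ be its maximum. Then $n_j\le k<n_{j+1}$, where $n_{j+1}=\infty$ if $\psi_j$ is the last version.

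For uniqueness, suppose $j<j'$ both satisfy the condition. Then $n_{j'}\ge n_{j+1}>k$, which contradicts $n_{j'}\le k$.

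Origin-version semantics assigns the obligation at $k$ to this $\psi_j$ at time $k$. Combined with the first part, its evaluation is then fixed regardless of when $c_{\psi_j}(k)$ occurs relative to later activations.

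\emph{Main obstacle.} The only delicate point is making "unaffected by subsequent revisions" precise rather than tautological. A later version is chosen using data, so it could in principle correlate with $\mathbf e$. I would handle this by stating invariance pathwise: for each fixed realization of $\mathbf e$, the values do not vary as the activation suffix varies. No probabilistic statement is needed, and the data-dependence of later versions is therefore irrelevant.
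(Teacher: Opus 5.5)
Your proposal is correct and follows the paper's proof. Both arguments read off from the definitions that the trigger status, $X_k^{\psi_j}$, and $c_{\psi_j}(k)$ involve only $\mathbf e$, $k$, and the components of $\psi_j$, with $c_{\psi_j}(k)$ finite by non-Zenoness and $H(\psi_j)<\infty$, and then attribute $k$ to the unique version with $n_j\le k<n_{j+1}$. Your existence/uniqueness argument via $n_0=0$ and strict monotonicity just spells out what the paper cites as the activation-interval convention; your pathwise invariance should be read over activation suffixes with $n_{j+1}>k$, which the hypothesis that the obligation originates under $\psi_j$ already guarantees.
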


The origin-version choice is statistically consequential rather than merely
notational. To see this, consider instead a hypothetical \emph{active-version semantics} that reinterprets an unresolved obligation using the specification
active when its response is evaluated. Choose distinct $u,v,w\in\AP$, an
admissible parameter envelope containing $(1,1,1)$, and a specification
space $\Psi$ containing the tuples
$\psi=(u,\bot,\top,v,1,1,1)$
and
$\psi'=(u,\bot,\top,w,1,1,1)$.
Their derived specifications are
$u\Rightarrow_{1,[1,1]}v$ and
$u\Rightarrow_{1,[1,1]}w$, respectively. The revision
$\psi\mathrel{\mathcal R}\psi'$ changes only the adaptive response.
Let $t_k=0$, $t_{k+1}=1$, $t_{k+2}=2$, with $u$ true at $k$,
$v$ true and $w$ false at $k+1$. The obligation originates under
$\psi$ at $k$ and, by the strict completion convention, completes
at $k+2$. Under origin-version semantics it succeeds regardless of
whether $\psi'$ is activated at $k+1$. Under the hypothetical
active-version interpretation, the same unresolved obligation succeeds
if $\psi$ remains active, but fails if $\psi'$ is activated at
$k+1$ and its response is used at completion. Thus two activation
histories with identical timed observations and the same origin version
can produce different active-version outcomes. This counterexample
concerns the interpretation of a pending obligation, not whether the
hypothetical activation has passed a particular statistical test.

Consequently, Assumption~\ref{ass:regime-stability} below is naturally
well-posed under origin-version semantics because it indexes a success
probability by a fixed specification $\psi$. Under active-version semantics,
the analogous probability would additionally depend on the stochastic
revision process; a statistical treatment would therefore have to enlarge
the probabilistic model to include that process explicitly. This observation
does not rule out such a model, but it shows why the fixed-$\psi$ regime
assumption used in this paper relies on non-retroactivity.

\begin{definition}[Regime sequence]
\label{def:regime-sequence}
Behavioral evolution is represented by an ordered sequence of regimes
$R_0,R_1,R_2,\ldots$,
we use $r$ exclusively for regime indices.
Each $R_r\subseteq\mathbb N$ is a nonempty
contiguous set of event indices. Whenever $R_{r+1}$ exists, $R_r$ is finite
and
$\min \{R_{r+1}\}
=
\max \{R_r\}+1$.
The regime sequence is a model-provided segmentation of the event-index
axis and is not defined through the success probabilities of any particular
specification. Regimes therefore represent exogenous behavioral phases of
the underlying process.
\end{definition}

\begin{assumption}[Regime-level stability]
\label{ass:regime-stability}
Let $\Pr$ denote the probability law governing the random timed
stream $\mathbf e$.
For a specification
$\psi=\varphi_1\Rightarrow_{\lambda,[a,b]}\varphi_2$
and a regime $R_r$, there exists
$p_{R_r}(\psi)\in[0,1]$. For each $n\in R_r$, let
\[
E_{n,r}^{\psi}
:=
\left\{
\exists m\in R_r:
m\ge n\text{ and }t_m>t_n+H(\psi)
\right\}
\]
be the event that the obligation originating at $n$ can be completed within
$R_r$. For every $n\in R_r$ such that
$\Pr(\{\mathbf e,n\models\varphi_1\}\cap E_{n,r}^{\psi})>0$, we require
\[
\Pr\!\left(
X_n^\psi=1
\mid
\mathbf e,n\models\varphi_1,
E_{n,r}^{\psi}
\right)
=
p_{R_r}(\psi).
\]

This assumption is invoked only for the particular incumbent or
candidate specification whose regime-level interpretation is needed;
it is not a simultaneous stationarity assumption over all
$\psi\in\Psi$.
\end{assumption}

A regime-level success parameter is identified by
Assumption~\ref{ass:regime-stability} only when the regime contains at
least one origin index with positive probability of both triggering and
completion within the regime.
If no such index exists, the assumption is vacuous and no regime-validity
claim is made.
Whenever Assumption~\ref{ass:regime-stability} holds and such
an index exists, the specification
$\psi$ is valid in regime $R_r$ if
\[
R_r\models\psi
\quad\Longleftrightarrow\quad
p_{R_r}(\psi)\ge\lambda_\psi.
\]
The statistical certification results below do not require
Assumption~\ref{ass:regime-stability}; it is used only to interpret a
certified average predictable success probability as a regime-level success
probability.

\begin{example}[Regime and specification indices]
\label{ex:regime-specification-indices}

Consider again the timed stream and specification activation indices of
Example~\ref{ex:specification-activation-indices}. In particular,
$n_0=0$,
$n_1=3$ and
$n_2=5$.

Suppose that the behavioral regime sequence is
$R_0=\{0,1\}$,
$R_1=\{2,3,4\}$ and
$R_2=\{5,6,7,\ldots\}$.

Then the transition from $R_0$ to $R_1$ occurs at event index $2$, while
$\psi_0$ is still active. The specification-version transition from
$\psi_0$ to $\psi_1$ occurs only at event index $n_1=3$.

Hence, regime boundaries and specification activation boundaries need not
coincide. In particular, the same specification $\psi_0$ is active in both
$R_0$ and the initial part of $R_1$. It may therefore have different
regime probabilities
$p_{R_0}(\psi_0)\neq p_{R_1}(\psi_0)$.
\end{example}

\section{Statistical Certification and Activation}
\label{sec:certification}

\paragraph{Notational convention.}
From this section onwards a single specification-version transition is under
consideration and we suppress its index $j$ on all run-level quantities,
writing $n^\star$, $\psi^\star$, $\lambda^\star$ and $H^\star$ for the
selection index, the selected candidate, its threshold and its observation
horizon. Certification quantities then carry a single index: an event index
when written $m$, a certification sample size when written $q$ or $s$. In
particular $N_m$ is the sample size available at event index $m$, so that
$\bar p_{N_m}$ is the running average $\bar p_q$ evaluated at $q=N_m$, and
$K_m$, $L_m$, $\widehat p_{N_m}$ are the corresponding trigger set, lower
bound and empirical frequency. A superscript $\mathrm{core}$ marks the
protected-trigger subsample, and a group index
$g\in\{\mathrm{all},\mathrm{core}\}$ is used where the aggregate and core
statements have the same form. Objects belonging to the \emph{sequence} of
versions retain the index: $\psi_j$, $n_j$, $\xi_j$, $\mathcal H_j$,
$\mathcal G_{j,s}$, $\delta_j$, $m_j^{\mathrm{cert}}$ and $F_j$.

\subsection{Drift as a proposal gate}
\label{sec:drift}

The governor separates candidate selection from activation. A drift
declaration enables an evolution attempt but does not contribute to the
activation certificate. Structural admissibility is checked when the
candidate is selected, whereas activation depends only on the subsequent
certification evidence of Section~\ref{sec:certification}.

The prototype uses a deterministic two-window detector. For a fixed window
length $h$, it compares the empirical success frequencies in the two most
recent non-overlapping windows of $h$ completed incumbent obligations and
declares a decrease when their difference exceeds a fixed margin. The
end-to-end experiment of Section~\ref{sec:rq3-experiment} exercises this
detector. The masked-core experiment instead supplies the drift declaration
exogenously in order to isolate the certification mechanism. The detector
affects when candidate selection begins but not the coverage guarantee of
the subsequent certification procedure.

A prefix-measurable detector cannot establish the success probability of
obligations generated after activation: observationally identical prefixes
may admit different continuations. Interpreting the activation certificate
as a guarantee for such obligations therefore requires the additional
stability condition discussed in Section~\ref{sec:activation-scope}.

\subsection{Candidate selection}

Recall that the proposer is a possibly learned and randomized mechanism
represented by
\[
\mathsf{Propose}:
\mathcal{E}^\ast\times\Psi\times\Xi
\rightarrow
2^\Psi.
\]

At event index $n$, let $\psi_j$ be the active specification. The proposer
returns a set of candidate revisions
\[
\mathcal{P}_{n,j}
=
\mathsf{Propose}
(\mathbf{e}_{\le n},\psi_j,\xi_n^{\mathrm{prop}})
\subseteq\Psi,
\]
where $\xi_n^{\mathrm{prop}}\in\Xi$ is the random seed used at that
invocation.

The proposer may be implemented by any learned or data-driven model; its
internal parameters are outside the trusted computing base.
No correctness or soundness assumption is made on
$\mathsf{Propose}$. In particular, it may return statistically unsupported
candidates, candidates violating protected requirements, or no candidate at
all.

Recall also that
$
\psi_j
\xrightarrow[n]{\mathcal R,\mathcal P}
\psi
$
denotes a proposed admissible revision while $\psi_j$ remains active, and
$
\psi_j
\xRightarrow[n_{j+1}]{}
\psi_{j+1}
$
denotes the subsequent certified activation of the next specification
version.

The index $j$ labels an \emph{actually activated version}, not an attempt
or a detector alarm. 
While $\psi_j$ is active, a drift declaration only enables
the governor to consider a proposal; selecting an admissible
candidate starts its separate, prospective certification. Neither
step changes $\psi_j$. The active specification changes and $j$ advances
only after the candidate first meets the activation test defined below.
Without a selected and certified candidate there is no next version, and
$\psi_j$ remains active. The bounds in this section are therefore used to
control the decision to activate a replacement, not to presume that a
replacement must occur.

The proposer may return multiple candidates in $\mathcal P_{n,j}$.
Among them, the admissible candidates are those satisfying
$\psi_j
\xrightarrow[n]{\mathcal R,\mathcal P}
\psi$.

For each version $\psi_j$ that becomes active, let $n^\star$ be the
event index of the first (and only) candidate selected by the governor
for its possible successor. Set $n^\star=\infty$ if no candidate is
ever selected while $\psi_j$ is active. For version indices never
reached by the execution, use the same convention and leave the
corresponding candidate undefined. On $\{n^\star<\infty\}$, let
$\psi^\star$ be the selected candidate. 
Selection is enabled by a drift declaration and requires
$\psi_j
\xrightarrow[n^\star]{\mathcal R,\mathcal P}
\psi^\star$.
The drift declaration and structural admissibility enable
\emph{selection}, not activation. There is at most one
certification attempt per active version; an unsuccessful
attempt does not create $\psi_{j+1}$ and cannot be replaced
by a second candidate under the same $j$.

The selection may depend arbitrarily on the observed prefix
$\mathbf{e}_{\le n^\star}$, on the candidate set
$\mathcal P_{n^\star,j}$, and on internal randomness available by the
selection time, but not on the post-selection outcomes subsequently used for
certification.

\subsection{Selection-time information}

\begin{definition}[Selection-time information]
\label{def:selection-information}
Recall the extended selection index
$n^\star\in\mathbb N\cup\{\infty\}$.
On $\{n^\star<\infty\}$, let $\xi_j$ collect the internal randomness
used by the runtime decision mechanisms up to selection, including the
proposer seed $\xi_{n^\star}^{\mathrm{prop}}$ and, when applicable, the
detector seed $\zeta_{n^\star}$. On $\{n^\star=\infty\}$, assign fixed
cemetery values to the stopped prefix $\mathbf e_{\le n^\star}$, the
candidate set $\mathcal P_{n^\star,j}$, the candidate $\psi^\star$, and
$\xi_j$; these objects are not used for certification on that event.

Define
\[
\mathcal H_j
=
\sigma\!\left(
\mathbf 1_{\{n^\star<\infty\}},
n^\star,
\mathbf e_{\le n^\star},
\mathcal P_{n^\star,j},
\psi^\star,
\xi_j
\right).
\]
Thus the fact of selection, its index, the observed prefix, and all
proposal-and-selection information are $\mathcal H_j$-measurable. No
post-selection certification outcome is included on paths with finite
selection time.
\end{definition}

On $\{n^\star<\infty\}$, $\mathcal H_j$ represents the information
available when $\psi^\star$ is selected for certification. In particular,
it contains no certification outcome generated by a trigger occurring
after $n^\star$. All subsequent certification samples, filtrations, and
bounds are defined only on this event.

By construction,
$\sigma(\mathbf e_{\le n^\star})
\subseteq
\mathcal H_j$.
Thus every selection-time decision based on the observed prefix and the
internal randomness collected in $\xi_j$ is $\mathcal H_j$-measurable.

\subsection{Candidate certification samples}
\label{sec:certification-samples}

On $\{n^\star<\infty\}$, let
\[
\psi^\star
=
\varphi_1
\Rightarrow_{\lambda^\star,[a,b]}
\varphi_2
\]
be the candidate selected at event index $n^\star$, and write
$X_k:=X_k^{\psi^\star}$
for the binary outcome of the candidate obligation originating at event
index $k$.

\paragraph{Aggregate and protected-trigger samples.}
Two certification samples are used: the \emph{aggregate} sample of all
post-selection completed candidate obligations and, when a protected
trigger is present, the \emph{core} sample restricted to protected
triggers. Let
$g\in\{\mathrm{all},\mathrm{core}\}$
index the enabled samples. The aggregate sample is always enabled; when
the superscript is omitted, $g=\mathrm{all}$.

For every event index $m>n^\star$, define
\[
K_m^{(\mathrm{all})}
=
\left\{
k>n^\star:
\mathbf e,k\models\varphi_1,\;
t_m>t_k+H^\star
\right\},
\qquad
K_m:=K_m^{(\mathrm{all})}.
\]
Thus $K_m$ contains exactly the post-selection trigger indices whose
obligations under $\psi^\star$ are fully observable by event index $m$.

When $\varphi_1^{\mathrm{core}}\neq\bot$, define
\[
K_m^{(\mathrm{core})}
=
\left\{
k\in K_m:
\mathbf e,k\models\varphi_1^{\mathrm{core}}
\right\}.
\]
For every enabled group $g$, set
\[
N_m^{(g)}
:=
\left|K_m^{(g)}\right|,
\qquad
N_m:=N_m^{(\mathrm{all})}.
\]

In all cases,
$K_m
\subseteq
\{n^\star+1,\ldots,m-1\}$.
When the core sample is enabled,
$K_m^{(\mathrm{core})}\subseteq K_m$.

For every enabled group $g$, the certification sample available at event
index $m$ is
\[
\left(X_k\right)_{k\in K_m^{(g)}}.
\]
The core sample tests the \emph{full} candidate response
$\operatorname{resp}(\psi^\star)$ on protected triggers, not merely the
protected response component in isolation.

The trigger condition $\mathbf e,k\models\varphi_1$ is a trace-level
property and does not depend on which specification version is currently
active. Consequently, the selected candidate can be monitored
prospectively before activation. The condition $t_m>t_k+H^\star$ ensures
that the trigger status and complete response outcome of the obligation
originating at $k$ are observable by event index $m$.

The designer fixes $\lambda_{\mathrm{core}}\in[0,1]$ independently of the
proposer and of candidate selection. When
$\varphi_1^{\mathrm{core}}=\bot$, the core sample and its certification
condition are disabled.

\begin{assumption}[Post-selection sample recurrence]
\label{ass:trigger-recurrence}
For every selected candidate and every enabled group $g$,
$\bigcup_{m>n^\star}K_m^{(g)}$
is almost surely infinite.
\end{assumption}

Assumption~\ref{ass:trigger-recurrence} keeps the certification estimand
conditional on actual trigger obligations and makes the time-uniform
statement below well defined for every sample size $q\ge1$. A
finite-trigger extension can instead be obtained by conservative padding,
but it targets a different quantity because non-occurrence is then
incorporated into the estimand.

For each enabled group $g$, enumerate the corresponding post-selection
origins in increasing order:
\[
\bigcup_{m>n^\star}K_m^{(g)}
=
\left\{
k_1^{(g)},k_2^{(g)},\ldots
\right\},
\qquad
k_1^{(g)}<k_2^{(g)}<\cdots .
\]
Define the event index at which the $s$-th outcome of group $g$ becomes
available by
\[
\tau_s^{(g)}
=
\min\left\{
m>n^\star:
N_m^{(g)}\ge s
\right\},
\]
and define the associated completed-outcome filtration by
\[
\mathcal G_{j,0}^{(g)}
=
\mathcal H_j,
\qquad
\mathcal G_{j,s}^{(g)}
=
\sigma\!\left(
\mathcal H_j,
\left(
k_r^{(g)},
\tau_r^{(g)},
X_{k_r^{(g)}}
\right)_{r=1}^{s}
\right).
\]
When several outcomes become available at the same event index, they are
entered into the certification sequence in increasing order of their
origin indices. Thus each outcome is added in one filtration step, even
when consecutive completion indices coincide.

Both $k_s^{(g)}$ and $X_{k_s^{(g)}}$ are
$\mathcal G_{j,s}^{(g)}$-measurable. Moreover,
$\left\{
\tau_s^{(g)}\le m
\right\}
=
\left\{
N_m^{(g)}\ge s
\right\}$, for every event index $m$.
The latter event is determined by $\mathcal H_j$ and
$\mathbf e_{\le m}$; hence $\tau_s^{(g)}$ is a stopping index for the
post-selection observation filtration.

Because all obligations of the selected candidate share the observation
horizon $H^\star$ and timestamps are non-decreasing, completion preserves
the order of their origins. Therefore, for every $m>n^\star$,
\begin{equation}
\label{eq:initial-segment}
K_m^{(g)}
=
\left\{
k_1^{(g)},\ldots,
k_{N_m^{(g)}}^{(g)}
\right\}.
\end{equation}
Accordingly, the sample available at event index $m$ consists exactly of
$X_{k_1^{(g)}},\ldots,
X_{k_{N_m^{(g)}}^{(g)}}$.

\subsection{Anytime-valid certification}

\paragraph{Certification statistics.}
For every enabled group $g$ and every $s,q\ge1$, define
\[
p_s^{(g)}
=
\Pr\!\left(
X_{k_s^{(g)}}=1
\mid
\mathcal G_{j,s-1}^{(g)}
\right),
\]
\[
\bar p_q^{(g)}
=
\frac1q
\sum_{s=1}^{q}p_s^{(g)},
\qquad
\widehat p_q^{(g)}
=
\frac1q
\sum_{s=1}^{q}X_{k_s^{(g)}}.
\]

Because $X_{k_s^{(g)}}$ is binary,
\[
p_s^{(g)}
=
\mathbb E\!\left[
X_{k_s^{(g)}}
\mid
\mathcal G_{j,s-1}^{(g)}
\right]
\]
is also the predictable conditional success probability of the $s$-th
outcome of group $g$. The quantity $\bar p_q^{(g)}$ is the running average
of these predictable conditional probabilities, while
$\widehat p_q^{(g)}$ is the corresponding observed frequency.

By Equation~\eqref{eq:initial-segment},
\[
\widehat p_{N_m^{(g)}}^{(g)}
=
\frac{1}{N_m^{(g)}}
\sum_{k\in K_m^{(g)}}X_k
\]
whenever $N_m^{(g)}>0$. Hence
$\widehat p_{N_m^{(g)}}^{(g)}$ is precisely the observed success frequency
of the outcomes of group $g$ available at event index $m$.

No stationarity or independence assumption is imposed on
$(p_s^{(g)})_{s\ge1}$. In particular, a certification sample may contain
outcomes affected by a behavioural transition.

At selection time, fix $\mathcal H_j$-measurable error allocations
\[
\delta_j^{\mathrm{all}}>0,
\qquad
\delta_j^{\mathrm{core}}\ge0,
\qquad
\delta_j^{\mathrm{all}}
+
\delta_j^{\mathrm{core}}
\le
\delta_j.
\]
If the protected trigger is $\bot$, the core group is disabled and
$\delta_j^{\mathrm{core}}=0$; otherwise
$\delta_j^{\mathrm{core}}>0$. Conditioning on $\mathcal H_j$ therefore
fixes all enabled allocations before any post-selection outcome is
observed.

For every enabled group $g$ and every $q\ge1$, set
\[
\delta_{j,q}^{(g)}
=
\frac{6\,\delta_j^{(g)}}{\pi^2q^2},
\qquad
\sum_{q\ge1}\delta_{j,q}^{(g)}
=
\delta_j^{(g)}.
\]
For $N_m^{(g)}>0$, define
\[
L_m^{(g)}
=
\max\left\{
0,\;
\widehat p_{N_m^{(g)}}^{(g)}
-
\sqrt{
\frac{
\log\!\left(
1/\delta_{j,N_m^{(g)}}^{(g)}
\right)
}{
2N_m^{(g)}
}
}
\right\},
\]
and set $L_m^{(g)}=0$ when $N_m^{(g)}=0$.

The bound targets the average predictable success probability of the
outcomes actually used for certification rather than assuming a constant
regime probability. When the core group is disabled, only the aggregate
instance is defined and evaluated.

\begin{definition}[Statistical certification]
\label{def:statistical-certification}
The selected candidate $\psi^\star$ is statistically certified at event
index $m>n^\star$ if
$N_m>0$,
$L_m\ge\lambda^\star$,
and, whenever core-conditional certification is enabled,
$N_m^{\mathrm{core}}>0$,
$L_m^{\mathrm{core}}
\ge
\lambda_{\mathrm{core}}$.
\end{definition}

This predicate tests the selected candidate using completed
post-selection outcomes. Passing it is a prerequisite for activation;
evaluating it does not itself change the active specification.

\paragraph{Masked core failure: why core-conditional certification is needed.}
Fix a specification $\psi_j$ and a regime $R_r$ satisfying
Assumption~\ref{ass:regime-stability}. For each eligible origin index $k$,
protected triggers and adaptive-only triggers partition the event
$\mathbf e,k\models\operatorname{trig}(\psi_j)$ into the disjoint events
$\mathbf e,k\models\varphi_1^{\mathrm{core}}$ and
$\mathbf e,k\models
\varphi_1^{\mathrm{adapt}}\land
\neg\varphi_1^{\mathrm{core}}$.
Suppose both groups have positive probability, and their conditional
probabilities below are independent of the eligible origin index $k$
within $R_r$. Define
\[
\begin{aligned}
w_r
&:=\Pr(\mathbf e,k\models\varphi_1^{\mathrm{core}}
 \mid\mathbf e,k\models\operatorname{trig}(\psi_j),
 E_{k,r}^{\psi_j}),\\
p_r^{\mathrm{core}}
&:=\Pr(X_k^{\psi_j}=1
 \mid\mathbf e,k\models\varphi_1^{\mathrm{core}},
 E_{k,r}^{\psi_j}),\\
p_r^{\mathrm{adapt}}
&:=\Pr(X_k^{\psi_j}=1
 \mid\mathbf e,k\models
 \varphi_1^{\mathrm{adapt}}\land
 \neg\varphi_1^{\mathrm{core}},
 E_{k,r}^{\psi_j}).
\end{aligned}
\]
The conditional success probabilities concern the \emph{full} response,
not the protected response component alone. The within-group stability
assumption is additional to Assumption~\ref{ass:regime-stability}.
By the law of total probability, conditioning on the full trigger and
within-regime completion and then on this partition yields
\begin{equation}
\label{eq:core-mixture}
p_{R_r}(\psi_j)
=w_r p_r^{\mathrm{core}}
 +(1-w_r)p_r^{\mathrm{adapt}}.
\end{equation}
If only the protected group has positive probability, the identity
reduces to $p_{R_r}(\psi_j)=p_r^{\mathrm{core}}$; the adaptive-only
conditional probability is then left undefined. If the protected group
has zero probability, $p_r^{\mathrm{core}}$ is not identified from this
regime. In particular, when $\varphi_1^{\mathrm{core}}=\bot$,
core certification is disabled and only the aggregate quantity is used.
The same decomposition applies to a fixed candidate $\psi^\star$
when the corresponding positivity and within-group assumptions hold; we
denote its protected-group conditional probability in regime $R_r$ by
$p_{R_r}^{\mathrm{core}}(\psi^\star)$. In what follows,
\emph{aggregate-only certification} denotes the ablation of
Definition~\ref{def:statistical-certification} obtained by omitting its
core-conditional conjunct.
It does not by itself identify the predictable means of its
post-selection certification sample with a regime-level probability.

The following proposition shows that aggregate certification alone can
accept a candidate even when its full response fails systematically on
every protected trigger.

\begin{proposition}[Masked core failure]
\label{prop:masked-core-failure}
There exist a protected decomposition, two regimes
$R_0$ and $R_1$, and a candidate specification $\psi^\star$ such that
\[
p_{R_0}(\psi^\star)=p_{R_1}(\psi^\star),
\qquad
p_{R_0}^{\mathrm{core}}(\psi^\star)=0.99,
\qquad
p_{R_1}^{\mathrm{core}}(\psi^\star)=0.
\]
Moreover, if $R_1$ governs an infinite post-selection certification
sequence, aggregate-only certification eventually certifies
$\psi^\star$ almost surely, whereas the joint aggregate and
core-conditional criterion of
Definition~\ref{def:statistical-certification} never certifies it.
\end{proposition}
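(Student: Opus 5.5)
The proof is an explicit construction followed by two limit arguments, one for each certification criterion. I would fix distinct atoms $c,d,u\in\AP$ and take the candidate $\psi^\star=(c,d,\top,u,\lambda^\star,0,0)$ with $\lambda^\star=0.9$, $\lambda_{\mathrm{core}}=0.9$, and $(0.9,0,0)\in\mathcal A_{\mathrm{par}}$. Then $\operatorname{trig}(\psi^\star)=c\lor d$, $\operatorname{resp}(\psi^\star)=\top\land u$, $H(\psi^\star)=0$, and the outcome $X_k$ is just the indicator that $u\in\nu_k$ (the window $[0,0]$ together with non-decreasing timestamps may admit later events at the same timestamp; I would forbid this with strictly increasing timestamps, say $t_n=n$).

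The stream is generated with i.i.d.\ labels inside each regime. Let $R_0=\{0,\dots,T-1\}$ and $R_1=\{T,T+1,\dots\}$. At every index exactly one of $c,d$ holds, with $c$ having probability $w=1/2$ in both regimes. In $R_0$, $u$ holds with probability $0.99$ given $c$ and $0.93$ given $d$. In $R_1$, $u$ holds with probability $0$ given $c$ and $1$ given $d$.

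Assumption~\ref{ass:regime-stability} holds with constant conditional probabilities, because the completion event $E^{\psi^\star}_{k,r}$ depends only on the deterministic timestamps and the event's own labels determine $X_k$. Equation~\eqref{eq:core-mixture} then gives
\[
p_{R_0}(\psi^\star)=\tfrac12(0.99)+\tfrac12(0.93)=0.96,\qquad p_{R_1}(\psi^\star)=\tfrac12\cdot 0+\tfrac12\cdot 1=0.5.
\]
This is not equal. So I would instead keep the core probabilities $0.99$ and $0$ and tune the adaptive probabilities and $w$ separately per regime; the weight $w_r$ is allowed to depend on $r$. Take $w_0=w_1=0.05$, adaptive success $1$ in $R_1$, and choose $p^{\mathrm{adapt}}_0$ so that $0.05(0.99)+0.95\,p^{\mathrm{adapt}}_0=0.95$. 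This gives $p^{\mathrm{adapt}}_0=(0.95-0.0495)/0.95\in[0,1]$, and both aggregates equal $0.95>\lambda^\star=0.9$. The first claim follows. The edge condition in $R_0$ (index $T-1$ cannot complete inside $R_0$) only shrinks the set of eligible indices and is harmless.

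For the certification claims, suppose $n^\star\ge T$, so that all post-selection origins lie in $R_1$. The labels after $n^\star$ are independent of $\mathcal H_j$ and of earlier labels. By \eqref{eq:initial-segment} and the fact that $k_s^{(g)}$ is determined by the labels at past indices, the predictable mean is $p_s^{(\mathrm{all})}=0.95$ for every $s$. Recurrence (Assumption~\ref{ass:trigger-recurrence}) holds almost surely by the second Borel--Cantelli lemma.

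\emph{Aggregate-only criterion.} The strong law gives $\widehat p_q\to0.95$ almost surely. The radius $\sqrt{\log(\pi^2q^2/(6\delta^{\mathrm{all}}_j))/(2q)}$ tends to $0$, and $N_m\to\infty$. Hence $L_m\to 0.95>0.9$, so $L_m\ge\lambda^\star$ eventually, almost surely.

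\emph{Joint criterion.} Every core outcome satisfies $X_k=0$ surely, so $\widehat p^{\mathrm{core}}=0$ and $L^{\mathrm{core}}_m=0<\lambda_{\mathrm{core}}$ for every $m$; the candidate is never certified. This argument needs $\lambda_{\mathrm{core}}>0$, which I would state as part of the construction.

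The main obstacle I expect is the bookkeeping that keeps the construction consistent with the paper's definitions rather than any probabilistic difficulty. It has three parts: matching $p_{R_0}=p_{R_1}$ exactly while respecting $[0,1]$, checking the within-regime completion events, and checking that the enumerated outcome sequence yields constant predictable means under $\mathcal G^{(g)}_{j,s-1}$. The last point is where I would be careful. Each $k_s^{(g)}$ is a stopping index for the label process, so the optional-skipping property of i.i.d.\ sequences gives $p_s^{(g)}$ equal to the within-group success probability.
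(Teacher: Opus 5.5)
Your proposal is correct and follows essentially the paper's route. Both proofs use an explicit two-regime construction with protected-trigger success $0.99$ versus $0$ and equal aggregate success above $0.9$, then the strong law with a vanishing Hoeffding radius for the aggregate-only rule, and an identically zero core frequency (so $L_m^{\mathrm{core}}=0<\lambda_{\mathrm{core}}$) for the joint rule.

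The differences are cosmetic. The paper spaces the origins and takes $w=0.01$ with $R_0$ outcomes independent of the group, so both aggregates equal $0.99$ without tuning; you use zero-horizon responses and tune $p^{\mathrm{adapt}}_0$ to get $0.95$ in both regimes. You should delete the abandoned $w=1/2$ attempt and take $T\ge2$. The predictable-mean and optional-skipping step is unnecessary for the stated claims, because the strong law applied to the shifted $R_1$ sequence already gives the aggregate conclusion.
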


\begin{remark}[Core-sample acquisition cost]
\label{rem:core-certification-cost}
Fix
$\lambda_{\mathrm{core}}\in[0,1)$ and a core confidence allocation
$\delta_j^{\mathrm{core}}>0$. On a perfect-response stream, define
\[
N^\star
:=
\min\left\{
q\ge1:
1-
\sqrt{
\frac{
\log\!\bigl(\pi^2q^2/(6\delta_j^{\mathrm{core}})\bigr)
}{
2q
}
}
\ge
\lambda_{\mathrm{core}}
\right\}.
\]
Thus $N^\star$ is the minimum number of completed protected-trigger
obligations required for the core conjunct
$L_m^{\mathrm{core}}
\ge
\lambda_{\mathrm{core}}$
of Definition~\ref{def:statistical-certification} on a stream with no
observed core failures.

Suppose that, along the ordered aggregate certification sequence
$(k_s)_{s\ge1}$, protected-trigger membership is independent across
origins and has probability $w>0$. Define
\[
T_{N^\star}
:=
\min\left\{
q\ge1:
\sum_{s=1}^{q}
\mathbf 1\!\left[
\mathbf e,k_s
\models
\varphi_1^{\mathrm{core}}
\right]
\ge
N^\star
\right\}.
\]
Hence $T_{N^\star}$ is the aggregate sample size required to collect
$N^\star$ completed protected-trigger obligations. It has a
negative-binomial distribution, with
\[
\mathbb E[T_{N^\star}]
=
\frac{N^\star}{w},
\qquad
\operatorname{Var}(T_{N^\star})
=
\frac{N^\star(1-w)}{w^2}.
\]

Therefore, under the perfect-response benchmark and for fixed
$\lambda_{\mathrm{core}}$ and $\delta_j^{\mathrm{core}}$, the expected
number of completed aggregate obligations that must be examined before
the core sample contains $N^\star$ outcomes, and hence before
$L_m^{\mathrm{core}}\ge\lambda_{\mathrm{core}}$ can hold, scales as
$1/w$.
Thus the expected aggregate sample size is inversely proportional to the
protected-trigger probability $w$: halving $w$ doubles the expected number
of completed aggregate obligations required to collect the same core sample.

This is a sample-acquisition cost, not by itself the complete activation
latency. Statistical certification also requires the aggregate conjunct,
and converting a number of completed obligations into event time depends
on trigger arrivals and completion delays.

For the Hoeffding bound used here, if
$\lambda_{\mathrm{core}}=0.9$ and the transition budget
$\delta_0=0.025$ is split equally, then
$\delta_0^{\mathrm{core}}=0.0125$ and a perfect-response stream gives
$N^\star=928$. 
With a protected-trigger share of $w=0.01$, the expected number of
completed aggregate obligations required to collect the $928$ protected
outcomes is
$\mathbb E[T_{N^\star}]
=
\frac{928}{0.01}
=
92{,}800$.
This calculation quantifies the additional evidence-collection latency
introduced by core-conditional certification and makes explicit its main
trade-off: stronger protection of a rare subgroup requires a larger
aggregate sample.
\end{remark}

\paragraph{Choice of lower bound.}
The subsequent results do not depend on the specific Hoeffding formula;
they require only that the lower bound cover
$\bar p_q^{(g)}$ simultaneously for all sample sizes $q$, conditionally
on the selection-time information $\mathcal H_j$. We use the stated
Hoeffding bound~\cite{Hoeffding1963} because it admits the short
self-contained proof given below.

The bound is conservative. In the controlled i.i.d.\ Bernoulli experiment
of Section~\ref{sec:rq1-experiment}, at true success probability
$0.98$, it required
a median of $1{,}461$ completed obligations, compared with $240$ for an
exact binomial test using the same spending schedule. The exact test,
however, relies on the additional i.i.d.\ assumption and is not valid under
the general history-dependent model used in the formal results. Any sharper
lower bound satisfying the same simultaneous conditional coverage property
could be substituted without changing the remaining arguments
~\cite{WaudbySmithRamdas2024}.

The next proposition establishes that each enabled lower bound covers its
corresponding average predictable success probability simultaneously at
every certification index.

\begin{proposition}[Anytime-valid lower confidence bound]
\label{prop:anytime-lower-bound}
For each enabled group $g\in\{\mathrm{all},\mathrm{core}\}$, conditionally on $\mathcal H_j$,
\[
\Pr\!\left(\forall m>n^\star\text{ with }N_m^{(g)}>0:\;
\bar p_{N_m^{(g)}}^{(g)}\ge L_m^{(g)}
\;\middle|\;\mathcal H_j\right)\ge1-\delta_j^{(g)}.
\]
The core instance is omitted when core certification is disabled.
\end{proposition}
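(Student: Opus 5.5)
The plan is to reduce the statement to a fixed-sample-size concentration bound for a bounded martingale-difference sequence, and then take a union bound over sample sizes. The reduction is possible because Equation~\eqref{eq:initial-segment} collapses the event-indexed statement to an index-$q$ statement. Fix an enabled group $g$ and work conditionally on $\mathcal H_j$ on the event $\{n^\star<\infty\}$; the allocation $\delta_j^{(g)}$, and hence every $\delta_{j,q}^{(g)}$, is then fixed.

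By Equation~\eqref{eq:initial-segment}, for every $m>n^\star$ with $N_m^{(g)}=q>0$, the sample available at $m$ is exactly $X_{k_1^{(g)}},\dots,X_{k_q^{(g)}}$. Hence $\widehat p^{(g)}_{N_m^{(g)}}=\widehat p_q^{(g)}$ and $\bar p^{(g)}_{N_m^{(g)}}=\bar p_q^{(g)}$. Set $\epsilon_q=\sqrt{\log(1/\delta_{j,q}^{(g)})/(2q)}$. Because $\bar p_q^{(g)}\ge0$, the truncation at $0$ in $L_m^{(g)}$ is harmless, and the failure event in the statement is contained in
\[
\bigcup_{q\ge1}\bigl\{\widehat p_q^{(g)}-\bar p_q^{(g)}>\epsilon_q\bigr\}.
\]
Assumption~\ref{ass:trigger-recurrence} guarantees that $k_q^{(g)}$, and hence each of these events, is well defined for every $q$. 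Values of $q$ that are never realized as some $N_m^{(g)}$ only enlarge the union, so they cause no harm.

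For a fixed $q$, define $D_s=X_{k_s^{(g)}}-p_s^{(g)}$. By the definition of $p_s^{(g)}$ and the fact that $X_{k_s^{(g)}}$ is $\mathcal G^{(g)}_{j,s}$-measurable, $(D_s)$ is a martingale-difference sequence with respect to $(\mathcal G^{(g)}_{j,s})_{s\ge0}$, where $\mathcal G^{(g)}_{j,0}=\mathcal H_j$. Moreover, conditionally on $\mathcal G^{(g)}_{j,s-1}$, each $D_s$ lies in the interval $[-p_s^{(g)},1-p_s^{(g)}]$. This interval has length $1$ and predictable endpoints.

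Hoeffding's lemma, applied conditionally, gives $\mathbb E[e^{\theta D_s}\mid\mathcal G^{(g)}_{j,s-1}]\le e^{\theta^2/8}$. Iterating by the tower property from $s=q$ down to $s=1$ yields $\mathbb E[e^{\theta\sum_{s\le q}D_s}\mid\mathcal H_j]\le e^{q\theta^2/8}$. The Chernoff argument with $\theta=4\epsilon_q$ then gives the Azuma--Hoeffding bound
\[
\Pr\bigl(\widehat p_q^{(g)}-\bar p_q^{(g)}>\epsilon_q\mid\mathcal H_j\bigr)\le e^{-2q\epsilon_q^2}=\delta_{j,q}^{(g)}.
\]
A conditional union bound over $q\ge1$, using $\sum_q\delta_{j,q}^{(g)}=\delta_j^{(g)}$, completes the proof. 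The core instance is identical with $g=\mathrm{core}$, and when core certification is disabled that instance is simply omitted.

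I expect the main obstacle to be the measure-theoretic bookkeeping rather than the inequality. Three points need care:
\begin{enumerate}
\item Conditioning on $\mathcal H_j$ must be legitimate for the iterated-expectation argument, which requires $\mathcal G^{(g)}_{j,0}=\mathcal H_j$. The data-dependent selection and the $\mathcal H_j$-measurable allocation are absorbed into the base of the filtration, so no post-selection outcome leaks into the conditioning.
\item The random origins $k_s^{(g)}$ must be $\mathcal G^{(g)}_{j,s}$-measurable. This is built into the definition of the filtration.
\item The reindexing from event time $m$ to sample size $q$ is needed so that the union is over the countable, deterministic index $q$ rather than over the random indices $m$.
\end{enumerate}
I would state the bound on $\{n^\star<\infty\}$ and treat the cemetery case as vacuous.
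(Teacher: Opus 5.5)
Your proposal is correct and follows essentially the same route as the paper's proof. Both arguments reduce to the sample-size index via Equation~\eqref{eq:initial-segment}, apply conditional Azuma--Hoeffding for each fixed $q$ to the width-one martingale differences $X_{k_s^{(g)}}-p_s^{(g)}$ (with $\mathcal G_{j,0}^{(g)}=\mathcal H_j$), take a union bound using $\sum_q\delta_{j,q}^{(g)}=\delta_j^{(g)}$, and handle the truncation at $0$ through $\bar p_q^{(g)}\ge0$. The differences are only in presentation: you derive the fixed-$q$ bound explicitly from Hoeffding's lemma, the tower property and Chernoff with $\theta=4\epsilon_q$, whereas the paper cites the martingale Hoeffding inequality; and you phrase the final step as containment of the failure event, whereas the paper substitutes $q=N_m^{(g)}$ pathwise on the good event.
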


Applying the preceding result to both enabled samples yields their
simultaneous coverage without requiring independence between them.

\begin{proposition}[Simultaneous aggregate and core-conditional coverage]
\label{prop:joint-core-coverage}
Conditionally on $\mathcal H_j$,
\[
\Pr\!\left(
\begin{array}{c}
\forall\, g\in\{\mathrm{all},\mathrm{core}\}
\text{ enabled},\
\forall\, m>n^\star
\text{ with }N_m^{(g)}>0:\\[1mm]
\bar p_{N_m^{(g)}}^{(g)}
\ge
L_m^{(g)}
\end{array}
\;\middle|\;
\mathcal H_j
\right)
\ge
1-\delta_j.
\]
The core instance is omitted when core certification is disabled.
No independence between the aggregate and core samples is required.
\end{proposition}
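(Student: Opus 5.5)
The plan is a union bound over the enabled groups, conditional on $\mathcal H_j$, using Proposition~\ref{prop:anytime-lower-bound} for each group separately. For each enabled $g$, let $B^{(g)}$ be the bad event that some $m>n^\star$ with $N_m^{(g)}>0$ has $\bar p_{N_m^{(g)}}^{(g)}<L_m^{(g)}$. The event in the statement is the complement of $\bigcup_g B^{(g)}$ over the enabled groups.

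\emph{Measurability.} First I will check that each $B^{(g)}$ is a measurable event on $\{n^\star<\infty\}$. By the initial-segment identity~\eqref{eq:initial-segment}, both $\bar p_{N_m^{(g)}}^{(g)}$ and $L_m^{(g)}$ depend on $m$ only through $q=N_m^{(g)}$. Hence $B^{(g)}$ is contained in the countable union over $q\ge1$ of
\[
\bigl\{\bar p_q^{(g)}<\widehat p_q^{(g)}-\varepsilon_q^{(g)}\bigr\},
\qquad
\varepsilon_q^{(g)}=\sqrt{\log(1/\delta_{j,q}^{(g)})/(2q)}.
\]
(The truncation at $0$ in $L_m^{(g)}$ is harmless, since $\bar p_q^{(g)}\ge0$.) Assumption~\ref{ass:trigger-recurrence} guarantees that every $q$ is attained. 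In fact $B^{(g)}$ equals this countable union, so it is measurable.

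\emph{Union bound.} Conditionally on $\mathcal H_j$, subadditivity of the regular conditional probability gives
\[
\Pr\bigl(B^{(\mathrm{all})}\cup B^{(\mathrm{core})}\mid\mathcal H_j\bigr)
\le
\Pr\bigl(B^{(\mathrm{all})}\mid\mathcal H_j\bigr)+\Pr\bigl(B^{(\mathrm{core})}\mid\mathcal H_j\bigr)
\le
\delta_j^{\mathrm{all}}+\delta_j^{\mathrm{core}}
\le
\delta_j
\]
almost surely. The second inequality is Proposition~\ref{prop:anytime-lower-bound} applied to each group. The last inequality uses that the allocations are $\mathcal H_j$-measurable and fixed before any outcome is observed, so they can be treated as constants under the conditioning. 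Because subadditivity holds for arbitrary dependent events, this argument needs no independence between the aggregate and core samples. This matters because the two samples overlap, since $K_m^{(\mathrm{core})}\subseteq K_m$, and because they use different filtrations $\mathcal G_{j,s}^{(\mathrm{all})}$ and $\mathcal G_{j,s}^{(\mathrm{core})}$. Taking complements gives the claimed bound $1-\delta_j$.

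\emph{Disabled core.} When $\varphi_1^{\mathrm{core}}=\bot$, only $B^{(\mathrm{all})}$ is present and $\delta_j^{\mathrm{core}}=0$. The bound then follows directly from Proposition~\ref{prop:anytime-lower-bound}, since $\delta_j^{\mathrm{all}}\le\delta_j$.

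\emph{Main obstacle.} The only delicate point is formal. The two per-group statements are conditional on $\mathcal H_j$ but are built on different filtrations, so I must make sure the union bound is applied to conditional probabilities of events in the common ambient $\sigma$-algebra $\mathcal F$. It suffices to fix one version of the regular conditional probability given $\mathcal H_j$, or equivalently to apply the bound to $\mathbb E[\mathbf 1_{B^{(\mathrm{all})}\cup B^{(\mathrm{core})}}\mid\mathcal H_j]$ using $\mathbf 1_{A\cup B}\le\mathbf 1_A+\mathbf 1_B$ and monotonicity of conditional expectation. That inequality holds almost surely, which is the sense in which the statement is meant.
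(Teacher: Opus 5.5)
Your proposal is correct and matches the paper's proof: it applies Proposition~\ref{prop:anytime-lower-bound} to each enabled group and takes a union bound to get $\delta_j^{\mathrm{all}}+\delta_j^{\mathrm{core}}\le\delta_j$, with no independence needed between the overlapping samples. One small slip in your measurability aside: Assumption~\ref{ass:trigger-recurrence} does not make every $q$ equal to some $N_m^{(g)}$, because several outcomes can complete at the same event index and $N_m^{(g)}$ can then skip values. So $B^{(g)}$ is in general only contained in your countable union over $q$, which is all the bound needs, and its measurability is better obtained by writing it as a countable union over $m$.
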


Conditioning on $\mathcal H_j$ fixes the selected candidate and its
protected-trigger predicate, but not the subsequent trigger origins or
response outcomes. The coverage results therefore remain valid after
adaptive candidate selection.

They control the average predictable success probability
$\bar p_q^{(g)}$ of the certification sequence. They do not by themselves
establish a fixed regime-level probability or the success probability of
obligations generated after activation; those interpretations require the
additional alignment or stability conditions stated below.

The following corollary identifies the additional alignment condition under
which simultaneous coverage of the certification samples admits a
regime-level interpretation.

\begin{corollary}[Joint regime-level interpretation]
\label{cor:joint-regime-level-certification}
Fix an event index $m$ and a regime $R_r$. For the aggregate group, define
$p_{R_r}^{(\mathrm{all})}(\psi^\star)
:=
p_{R_r}(\psi^\star)$
and
$\lambda_{\mathrm{all}}
:=
\lambda^\star$.
For the core group, when enabled, retain
$p_{R_r}^{\mathrm{core}}(\psi^\star)$ and
$\lambda_{\mathrm{core}}$. Assume that
$p_{R_r}^{(g)}(\psi^\star)$ is well defined for every enabled group $g$.

Suppose that, for every enabled group
$g\in\{\mathrm{all},\mathrm{core}\}$ and every $s\ge1$,
\[
\mathbf 1_{\{n^\star<m,\;N_m^{(g)}\ge s\}}
p_s^{(g)}
=
\mathbf 1_{\{n^\star<m,\;N_m^{(g)}\ge s\}}
p_{R_r}^{(g)}(\psi^\star)
\qquad\text{a.s.}
\]
Then, conditionally on $\mathcal H_j$,
\[
\Pr\!\left(
\begin{array}{c}
n^\star<m,\;
\psi^\star\text{ is statistically certified at }m,\\
\exists\,g\in\{\mathrm{all},\mathrm{core}\}\text{ enabled}:
p_{R_r}^{(g)}(\psi^\star)<\lambda_g
\end{array}
\;\middle|\;
\mathcal H_j
\right)
\le
\delta_j.
\]
\end{corollary}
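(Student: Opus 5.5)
The plan is to reduce the corollary to the simultaneous coverage event of Proposition~\ref{prop:joint-core-coverage}. Let $G$ denote that event, on which $\bar p_{N_m^{(g)}}^{(g)}\ge L_m^{(g)}$ holds for every enabled $g$ and every $m'>n^\star$ with $N_{m'}^{(g)}>0$. Proposition~\ref{prop:joint-core-coverage} gives $\Pr(G^c\mid\mathcal H_j)\le\delta_j$. It therefore suffices to show that the event $B$ displayed in the corollary is contained in $G^c$ up to a null set. Conditioning on $\mathcal H_j$ then gives $\Pr(B\mid\mathcal H_j)\le\Pr(G^c\mid\mathcal H_j)\le\delta_j$ by monotonicity of conditional probability.

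To establish $B\subseteq G^c$ a.s., work on $B\cap G$. Certification at $m$ (Definition~\ref{def:statistical-certification}) gives $n^\star<m$ and $N_m^{(g)}>0$ for every enabled $g$. It also gives $L_m^{(g)}\ge\lambda_g$, with the convention $\lambda_{\mathrm{all}}=\lambda^\star$. On this event the alignment hypothesis applies to each $s=1,\dots,N_m^{(g)}$, so $p_s^{(g)}=p_{R_r}^{(g)}(\psi^\star)$ a.s. Averaging over these $s$ yields
\[
\bar p_{N_m^{(g)}}^{(g)}=p_{R_r}^{(g)}(\psi^\star)\quad\text{a.s.}
\]
There are countably many $s$ and $g\in\{\mathrm{all},\mathrm{core}\}$, so the exceptional null sets can be collected into a single null set $Z$. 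On $(B\cap G)\setminus Z$ we then get, for every enabled $g$,
\[
p_{R_r}^{(g)}(\psi^\star)=\bar p_{N_m^{(g)}}^{(g)}\ge L_m^{(g)}\ge\lambda_g .
\]
This contradicts the third clause of $B$, which asks for some enabled $g$ with $p_{R_r}^{(g)}(\psi^\star)<\lambda_g$. Hence $B\cap G\subseteq Z$.

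The step I expect to need the most care is the measure-theoretic bookkeeping around the alignment hypothesis. The quantity $p_{R_r}^{(g)}(\psi^\star)$ depends on $\psi^\star$ and is therefore $\mathcal H_j$-measurable. The hypothesis holds only almost surely and only on the indicator event, which has random extent $N_m^{(g)}$. Accordingly, I would write $\{N_m^{(g)}\ge s\}$ as the union over $q\ge s$ of $\{N_m^{(g)}=q\}$ and apply the hypothesis for each fixed $s\le q$ separately. A countable union of null sets is harmless, and conditional probabilities given $\mathcal H_j$ are unaffected by null sets, so the inequality holds $\mathcal H_j$-a.s. A second point to verify is that the core clause is handled correctly. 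When core certification is disabled, only $g=\mathrm{all}$ appears both in $B$ and in $G$, and Proposition~\ref{prop:joint-core-coverage} already holds with $\delta_j^{\mathrm{all}}\le\delta_j$. No separate argument is needed, and independence between the aggregate and core samples is never used.
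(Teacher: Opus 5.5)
Your proposal is correct and follows essentially the same route as the paper: intersect with the simultaneous coverage event $\mathcal C_j$ of Proposition~\ref{prop:joint-core-coverage}, use the alignment hypothesis to identify $\bar p_{N_m^{(g)}}^{(g)}$ with $p_{R_r}^{(g)}(\psi^\star)$ on the certification event, and conclude that the bad event lies in $\mathcal C_j^c$ up to a null set. Your explicit collection of the countably many exceptional null sets only spells out the paper's ``up to a null set''; note also that your definition of $G$ should quantify $\bar p_{N_{m'}^{(g)}}^{(g)}\ge L_{m'}^{(g)}$ over $m'$ rather than fixing $m$.
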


\subsection{Scope of the activation certificate}
\label{sec:activation-scope}

The certificate controls
$\bar p_{N_m^{(g)}}^{(g)}$, the average predictable success probability
of the outcomes used at the certification index. A regime-level
interpretation requires the alignment condition of
Corollary~\ref{cor:joint-regime-level-certification}.

The first obligation generated under the successor version is not the next
member of the certification sequence. Therefore, the certificate does not
by itself control the success probability of obligations generated after
activation. Such a conclusion requires an additional stability assumption.
Indeed, two probability laws may agree on the entire observed prefix,
including the certification samples, lower bounds, and activation decision,
while assigning different success probabilities to subsequent obligations.

For clarity, the remaining scope restrictions are collected here. Group
membership must be compatible with the completed-outcome filtration; fixing
a terminal-window length alone does not make a split predictable. Aggregate
and core samples may overlap, and their coverage failures are combined by a
union bound rather than an independence assumption. The present protocol
allows one selected candidate per active version; repeated attempts require
an additional predictable error allocation. Certification supplies no liveness guarantee and no future operational
guarantee without an additional persistence assumption relating the
certification regime to obligations generated after activation.

A model-based or hybrid core certificate is admissible if it provides a
valid lower bound on the protected-trigger conditional success probability
and accounts for its estimation and repeated-use error within the lifetime
budget. Aggregate model accuracy alone is insufficient: conditioning on a
protected event of probability $w$ may amplify a joint-law error of order
$\varepsilon_S$ to a conditional error of order
$\varepsilon_S/w$.

More generally, any additional outcome group must be defined so that its
certification sequence satisfies the filtration conditions used by the
confidence bound.

\subsection{Certified activation}

The end-to-end architecture uses a drift declaration as a
necessity gate for candidate selection, as formalized in
Section~\ref{sec:drift}. It is not a second condition checked at
activation: structural admissibility is verified at selection, and once
a candidate has been selected its activation depends on the
post-selection certification evidence.

On $\{n^\star<\infty\}$, the governor monitors the fixed, structurally
admissible candidate $\psi^\star$ without replacing the incumbent.
Define the first eligible certification index by
\[
\begin{aligned}
m_j^{\mathrm{cert}}
:=\inf\bigl\{m\in\mathbb N:\;&m>n^\star,
\ N_m>0,
L_m
\ge\lambda^\star,\\
&\ 
\left(
\varphi_1^{\mathrm{core}}=\bot
\;\lor\;
\left(
N_m^{\mathrm{core}}>0
\;\land\;
L_m^{\mathrm{core}}
\ge\lambda_{\mathrm{core}}
\right)
\right)\},
\end{aligned}
\]
with $\inf\varnothing=\infty$. If $n^\star=\infty$, set
$m_j^{\mathrm{cert}}=\infty$. The core test in this definition retains
both branches above: it is automatically true only for a specification
without a protected trigger.

An actual specification change occurs \emph{if and only if}
$m_j^{\mathrm{cert}}<\infty$. On this event define
\[
\psi_{j+1}:=\psi^\star,
\qquad
n_{j+1}:=m_j^{\mathrm{cert}}+1,
\qquad
\psi_j\xRightarrow[n_{j+1}]{}\psi_{j+1}.
\]
If $m_j^{\mathrm{cert}}=\infty$, there is no successor version,
$n_{j+1}=\infty$, and the index $j$ does not advance.
The certification decision is made after processing
$\mathbf e_{\le m_j^{\mathrm{cert}}}$; the new version first governs
$e_{m_j^{\mathrm{cert}}+1}$. Hence the incumbent $\psi_j$ remains active
throughout the selection and certification phases, and the first
successful certification test---not the drift alarm or proposal---activates
the modification.

\subsection{Lifetime guarantee}

For each possible specification-version transition $j$, let the total error
level $\delta_j\in(0,1)$ be chosen no later than the selection time represented
by $\mathcal H_j$. Thus $\delta_j$ and its split
$\delta_j^{\mathrm{all}},\delta_j^{\mathrm{core}}$ are deterministic or
$\mathcal H_j$-measurable and cannot depend on certification outcomes
revealed after selection. Assume that, almost surely,
\[
\sum_{j=0}^{\infty}\delta_j\le\delta
\]
for a fixed lifetime budget $\delta\in(0,1)$.

Only a finite $m_j^{\mathrm{cert}}$ creates $\psi_{j+1}$.
For a version index never reached, or when no candidate is selected or
certified, all failure events below are empty. On a run where activation
occurs, define
\[
F_j^{\mathrm{all}}
=
\left\{
\psi^\star\text{ is activated and }
\bar p_{N_{n_{j+1}-1}}
<\lambda^\star
\right\}.
\]
When core-conditional certification is enabled, define
\[
F_j^{\mathrm{core}}
=
\left\{
\psi^\star\text{ is activated and }
\bar p^{\mathrm{core}}_{N_{n_{j+1}-1}^{\mathrm{core}}}
<\lambda_{\mathrm{core}}
\right\};
\]
otherwise set $F_j^{\mathrm{core}}=\varnothing$. Set
$F_j=F_j^{\mathrm{all}}\cup F_j^{\mathrm{core}}$, and $F_j=\varnothing$ whenever $\psi_j$ is not
reached or no successor is activated.
Thus $F_j$ concerns only \emph{activated} candidates whose
aggregate or (when enabled) core predictable-mean requirement fails
at the activation decision. It does not assert regime validity at
activation, success of future obligations, or eventual adaptation.

The per-transition coverage guarantee extends to the entire sequence of
activated versions by allocating a summable error budget across version
transitions.

\begin{theorem}[Proposer-independent lifetime certification]
\label{thm:lifetime-certification}
For each version index $j\ge0$, suppose that at most one candidate is
selected while $\psi_j$ is active and that, whenever selection occurs, the
conditions of Proposition~\ref{prop:joint-core-coverage} hold with error
budget $\delta_j$. Suppose also that each $\delta_j$ is fixed no later than
selection time and that
$\sum_{j=0}^{\infty}\delta_j\le\delta$
almost surely
for some $\delta\in(0,1)$. Then
\[
\Pr\!\left(
\bigcup_{j=0}^{\infty}F_j
\right)
\le
\delta.
\]
\end{theorem}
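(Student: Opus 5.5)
The plan is a countable union bound over version transitions, with each term controlled conditionally on the selection-time $\sigma$-algebra $\mathcal H_j$ via Proposition~\ref{prop:joint-core-coverage}, and the budget condition applied only after taking expectations.

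\emph{Step 1 (per-transition containment).} Fix $j$. On $\{n^\star=\infty\}$, or when $m_j^{\mathrm{cert}}=\infty$, we have $F_j=\varnothing$ by definition. Now suppose activation occurs. Then $m:=m_j^{\mathrm{cert}}=n_{j+1}-1>n^\star$ is a certification index with $N_m>0$ and $L_m\ge\lambda^\star$. When the core group is enabled, we also have $N_m^{\mathrm{core}}>0$ and $L_m^{\mathrm{core}}\ge\lambda_{\mathrm{core}}$. Hence on $F_j^{\mathrm{all}}$ we get $\bar p_{N_m}<\lambda^\star\le L_m$, and on $F_j^{\mathrm{core}}$ we get $\bar p^{\mathrm{core}}_{N_m^{\mathrm{core}}}<\lambda_{\mathrm{core}}\le L_m^{\mathrm{core}}$. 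Either way, some enabled group $g$ and some index $m>n^\star$ with $N_m^{(g)}>0$ violate $\bar p^{(g)}_{N_m^{(g)}}\ge L_m^{(g)}$. So $F_j\subseteq\{n^\star<\infty\}\cap C_j^c$, where $C_j$ is the simultaneous coverage event of Proposition~\ref{prop:joint-core-coverage} for transition $j$. The key point is that this event is uniform over all $m$, so the data-dependent choice $m=m_j^{\mathrm{cert}}$ (a stopping index) causes no difficulty.

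\emph{Step 2 (conditional bound, then expectation).} By Proposition~\ref{prop:joint-core-coverage}, $\Pr(C_j^c\mid\mathcal H_j)\le\delta_j$ on $\{n^\star<\infty\}$. Since $\{n^\star<\infty\}\in\mathcal H_j$ and $\delta_j$ is $\mathcal H_j$-measurable, the tower property gives
\[
\Pr(F_j)\le\mathbb E\bigl[\mathbf 1_{\{n^\star<\infty\}}\Pr(C_j^c\mid\mathcal H_j)\bigr]\le\mathbb E\bigl[\delta_j\mathbf 1_{\{n^\star<\infty\}}\bigr].
\]
Countable subadditivity and monotone convergence then yield
\[
\Pr\Bigl(\bigcup_j F_j\Bigr)\le\sum_j\mathbb E[\delta_j]=\mathbb E\Bigl[\sum_j\delta_j\Bigr]\le\delta,
\]
using the almost-sure budget condition. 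No independence across transitions and no assumption on the proposer is needed: the proposer's influence is absorbed into $\mathcal H_j$, which is exactly what the conditional coverage statement tolerates.

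\emph{Main obstacle.} The step needing most care is making Step 2 rigorous when the transitions themselves are random. Transition $j$ exists only on the event that $\psi_j$ was reached, and $n^\star$, $\mathcal H_j$ and $\delta_j$ are defined via cemetery values otherwise. I would verify that $\{\psi_j \text{ reached},\,n^\star<\infty\}$ is $\mathcal H_j$-measurable, since it is determined by the observed prefix up to $n^\star$ and the internal randomness. I would also verify that Proposition~\ref{prop:joint-core-coverage} is stated conditionally and applies pathwise on that event. The hypothesis of at most one selection per version keeps a single $\delta_j$ per $j$. The summation must use the random $\delta_j$ inside the expectation, not $\sup\delta_j$, which is why the bound goes through $\mathbb E[\sum_j\delta_j]\le\delta$.
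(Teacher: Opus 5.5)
Your proposal is correct and follows essentially the same route as the paper. The paper likewise proves the containment $F_j\subseteq\{n^\star<\infty\}\cap\mathcal C_j^c$ using coverage that is uniform in $m$, evaluated at the first-crossing index $m_j^{\mathrm{cert}}=n_{j+1}-1$; it then bounds $\Pr(F_j)\le\mathbb E[\mathbf 1_{\{n^\star<\infty\}}\delta_j]$ by the tower property and concludes by countable subadditivity, Tonelli, and the almost-sure budget. The measurability point you flag holds by construction in the paper: $\mathbf 1_{\{n^\star<\infty\}}$ is a generator of $\mathcal H_j$, and unreached versions are assigned $n^\star=\infty$.
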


The lifetime bound is independent of the architecture, training procedure,
and randomization of the proposer, provided that the selected candidate is
$\mathcal H_j$-measurable and is fixed before any post-selection
certification outcome is observed. In particular, the protected-trigger
predicate defining the core sample cannot be modified during certification.

Candidate multiplicity, sample overlap, repeated attempts, and liveness are
subject to the restrictions collected in
Section~\ref{sec:activation-scope}. The theorem itself controls only the
unconditional probability of an erroneous activated revision at its
decision index.
\section{End-to-End Architecture and Algorithm}
\label{sec:algorithm}
\label{sec:end-to-end}

The preceding sections define admissible revisions, candidate selection,
post-selection certification, and certified activation. We now state how
these components are executed at each event index. The protocol uses the
completed-obligation interface and does not assume any transfer of internal
monitor states between specification versions.

\paragraph{Monitoring across specification versions.}
For every origin index $k$, there is a unique specification version
$\psi_i$ such that
$n_i\le k<n_{i+1}$.
The trigger status, response outcome, and completion index of the obligation
originating at $k$ are evaluated using this origin version $\psi_i$.
Consequently, after $\psi_{i+1}$ becomes active, $\psi_i$ remains responsible
for completing obligations whose origins precede $n_{i+1}$, but it generates
no obligations at later origins.

This distinction is necessary because an obligation may be completed after
its origin version has ceased to be active. In particular, a trigger or
response containing future-time operators may become decidable only after a
later version has been activated. The subsequent activation does not alter
the trigger, response, interval, or outcome attached to the earlier origin.

A selected candidate $\psi^\star$ is monitored separately before
activation. Its certification-only monitor evaluates candidate obligations
with origins $k>n^\star$ and supplies their completed outcomes to the
aggregate and, when enabled, core certification samples. These outcomes are
certification evidence. They are not reclassified as operational
obligations generated under $\psi^\star$ if the candidate is later
activated.

\begin{figure}[t]
\centering
\caption{Ordered governor protocol at event index $n$.}
\Description{An ordered phase table showing how origin-version obligations,
candidate certification, activation, and proposal generation are processed
at each event index.}
\label{fig:governor-protocol}
\small
\begin{tabular}{@{}p{0.07\linewidth}p{0.18\linewidth}
                    p{0.27\linewidth}p{0.40\linewidth}@{}}
\toprule
Step & Phase & Condition & Governor action \\
\midrule

1
& Observe
& Every event $e_n$
& Advance all unresolved origin-version obligations. Record every trigger
status and response outcome that becomes observable at $n$, using the
specification version under which the corresponding obligation originated.
Update the detector for the active specification $\psi_j$. \\

2
& Certify
& A candidate $\psi^\star$ has been selected
& Advance the certification-only monitor on candidate origins
$k>n^\star$. Add newly completed outcomes to the aggregate sample and,
when enabled, to the core sample. Evaluate $L_n$ and
$L_n^{\mathrm{core}}$. \\

3
& Activate
& $\psi^\star$ satisfies
Definition~\ref{def:statistical-certification} for the first time at $n$
& Set
$m_j^{\mathrm{cert}}=n$,
$\psi_{j+1}=\psi^\star$, and
$n_{j+1}=n+1$. The new version governs origins from $n+1$ onward.
Obligations with earlier origins retain their origin versions. \\

4
& Propose
& a drift declaration occurs at $n$ and no candidate has yet been selected
while $\psi_j$ is active
& Obtain $\mathcal P_{n,j}$, discard every candidate that fails
$\psi_j\mathrel{\mathcal R}\psi$, select at most one survivor, fix its
error allocation, set $n^\star=n$, and start prospective certification. \\

5
& Continue
& Neither activation nor candidate selection occurs
& Keep $\psi_j$ active. Continue resolving obligations under their origin
versions. A drift declaration, proposal, or admissibility check alone does
not change the active specification. \\

\bottomrule
\end{tabular}
\end{figure}

\paragraph{Execution order.}
The rows of Figure~\ref{fig:governor-protocol} are evaluated in order. First,
the governor processes $e_n$ and resolves every obligation outcome that
becomes observable at that event index. If a candidate has already been
selected, these newly available outcomes may enlarge its certification
samples and change its lower bounds.

Activation is then checked before a new proposal is considered. If the
candidate first satisfies the certification condition at $n$, the
certification decision uses only the prefix $\mathbf e_{\le n}$ and the
successor version first governs event $e_{n+1}$. Thus
$n_{j+1}=m_j^{\mathrm{cert}}+1$.
No obligation with origin at or before $m_j^{\mathrm{cert}}$ is retroactively
assigned to $\psi_{j+1}$.

If no candidate has been selected and no activation occurs, a drift
declaration may enable proposal generation. Selection fixes
$\psi^\star$, $n^\star$, and its error allocation using information
available at that time. Certification then uses only completed obligations
whose origins satisfy $k>n^\star$. Proposal and selection do not alter
$\psi_j$.

The three indices therefore have distinct meanings:
$n^\star$
selects the candidate,
$m_j^{\mathrm{cert}}$
records the certification decision, $n_{j+1}$ starts the successor version.

\paragraph{Guarantees enforced by the protocol.}
Every activated successor satisfies the admissible-revision relation.
Therefore, the protected trigger and response components are unchanged and
the parameter tuple remains inside the designer-fixed envelope throughout
the activated version sequence.

The certification samples are separated from the prefix used to select the
candidate. At activation, the aggregate lower bound controls the average
predictable success probability of the completed aggregate outcomes used by
the decision. When core certification is enabled, the core lower bound
simultaneously controls the corresponding average on protected triggers.
The aggregate and core samples may overlap; their failure probabilities are
combined through their error allocations, without an independence
assumption.

For transition $j$, the allocation $\delta_j$ and its aggregate/core split
are fixed at selection time. For example,
$\delta_j=\frac{\delta}{2^{j+1}}$
gives
$\sum_{j=0}^{\infty}\delta_j\le\delta$.
Consequently, over the sequence of activated versions, the probability that
some activated candidate fails its aggregate or enabled core
predictable-mean requirement at its activation decision is at most
$\delta$.

This guarantee concerns the completed certification outcomes available at
the decision index. Obligations generated after activation are not the next
members of the certification sequence. Interpreting the certificate as a
statement about a regime or about successor-version operational obligations
requires the additional alignment condition stated in
Corollary~\ref{cor:joint-regime-level-certification}.

\paragraph{Runtime overhead and liveness.}
While a candidate is being certified, the active specification and the
selected candidate are monitored in parallel. In addition, earlier
specification versions may still have unresolved origin-version
obligations. Once all obligations attributed to an earlier version have
been resolved, no further monitoring work is required for that version.

The detector continues to update while certification is active, but further
declarations do not select another candidate. Under the
single-attempt-per-version rule, a selected candidate that never certifies
may therefore prevent further adaptation while the incumbent remains
active. Repeated attempts would require an additional attempt index and a
predictable allocation of $\delta_j$ across attempts.
\section{Experimental evaluation}
\label{sec:evaluation}

The evaluation contains three controlled synthetic experiments and a
monitored case study, each tied to a stated result. RQ1 illustrates the optional-continuation protection
of Proposition~\ref{prop:anytime-lower-bound} and quantifies the conservativeness
of the Hoeffding bound. RQ2 illustrates the masked-core counterexample
of Proposition~\ref{prop:masked-core-failure} at a single activation
decision. RQ3 exercises repeated proposal and certification under the
lifetime schedule of Theorem~\ref{thm:lifetime-certification}.

\paragraph{The proposer.}
All three experiments and the case study use the same proposer: a supervised
regressor trained offline on 4,000 independent synthetic tasks and frozen
before any experiment is run. It receives the prefix aggregate frequency,
the trigger coverage, the threshold, and two structural indicators, and
predicts a candidate's aggregate margin. It never receives the
protected-trigger success rate. The masked-core failures reported below are
therefore produced by a proposer optimising the quantity it can observe, not
by an adversary constructed to defeat the governor; responsibility for the
protected group rests with the certification rule, not with the proposer.

All experiments use one selected candidate and one certification attempt per
active version, as required by the protocol in Figure~\ref{fig:governor-protocol}. Censored
runs are those in which the candidate is not certified by the finite
horizon. Wilson intervals below quantify Monte Carlo variation across
replications; they are not confidence-sequence bounds. The accompanying
script fixes all seeds and regenerates the reported tables, figures, and
per-run outputs.

\subsection{RQ1: optional continuation and finite-sample latency}
\label{sec:rq1-experiment}

\paragraph{Question.}
Does fresh post-selection sampling prevent prefix selection from entering
the certificate, and what finite-sample latency is introduced by
anytime-valid inspection?

This experiment illustrates optional continuation and the
conservativeness of the Hoeffding bound; it is not a separate contribution
and is reported as a single table.

An independent prefix selects the empirically best of eight candidates;
the compared rules then receive identical fresh Bernoulli outcomes. Across
1,000 replications, we compare direct activation, repeated misuse of a
fixed-time exact test, an exact test with spending, and the paper's Hoeffding
rule. The complete protocol and seeds are provided with the artifact.

Table~\ref{tab:rq1-certification} summarizes the activation frequencies and
the median numbers of completed post-selection obligations at activation.
\begin{table}[t]
\centering
\small
\caption{RQ1 synthetic post-selection certification. Each row has 1000 independent replications and 6,000 completed-obligation opportunities. Wilson intervals measure Monte Carlo variation; $q$ is the median completed count at activation.}
\label{tab:rq1-certification}
\resizebox{\linewidth}{!}{%
\begin{tabular}{@{}llrrrr@{}}
\toprule
True $p$ & Procedure & Activations & Wilson 95\% (\%) & Median $q$ & Censored \\
\midrule
0.890 & Direct proposal & 1000/1000 & [99.6, 100.0] & 0 & 0 \\
 & Fixed-time, repeatedly tested & 56/1000 & [4.3, 7.2] & 107 & 944 \\
 & Exact test with spending & 0/1000 & [0.0, 0.4] & -- & 1000 \\
 & Paper's Hoeffding bound & 0/1000 & [0.0, 0.4] & -- & 1000 \\
0.899 & Direct proposal & 1000/1000 & [99.6, 100.0] & 0 & 0 \\
 & Fixed-time, repeatedly tested & 190/1000 & [16.7, 21.5] & 249 & 810 \\
 & Exact test with spending & 0/1000 & [0.0, 0.4] & -- & 1000 \\
 & Paper's Hoeffding bound & 0/1000 & [0.0, 0.4] & -- & 1000 \\
0.980 & Direct proposal & 1000/1000 & [99.6, 100.0] & 0 & 0 \\
 & Fixed-time, repeatedly tested & 1000/1000 & [99.6, 100.0] & 54 & 0 \\
 & Exact test with spending & 1000/1000 & [99.6, 100.0] & 240 & 0 \\
 & Paper's Hoeffding bound & 1000/1000 & [99.6, 100.0] & 1,461 & 0 \\
\bottomrule
\end{tabular}}
\end{table}

At the below-threshold probabilities $p=0.890$ and $p=0.899$, direct proposal
accepts every candidate, while repeated use of the unadjusted fixed-level test
activates in 56/1,000 and 190/1,000 runs. Neither spent procedure activates
in these simulations. At the above-threshold probability $p=0.98$, all three
statistical procedures eventually activate, but their median completed
sample sizes are $54$, $240$, and $1{,}461$ for the repeatedly inspected
fixed-level test, the spent exact test, and the Hoeffding bound,
respectively. The experiment illustrates the protection required for
sequential inspection and the finite-sample cost of the Hoeffding bound;
it is not a separate theoretical contribution.

\subsection{RQ2: masked-core failure}
\label{sec:rq2-experiment}

\paragraph{Question.}
Can an aggregate-oriented AI proposal pass aggregate certification while
systematically failing on protected triggers, and does joint certification block it?

Write $p^{\mathrm{core}}$ for the protected-group success probability and
suppose non-protected responses always succeed. Equation~\eqref{eq:core-mixture}
then gives
\[
p=1-w+w p^{\mathrm{core}}.
\]
Asymptotic aggregate certification requires $p>\lambda$,
whereas core-conditional support fails when
$p^{\mathrm{core}}<\lambda_{\mathrm{core}}$. The masked-core region is
therefore
\begin{equation}
\label{eq:masked-window}
\left\{p^{\mathrm{core}}\in[0,1]:
p^{\mathrm{core}}>1-\frac{1-\lambda}{w},\quad
p^{\mathrm{core}}<\lambda_{\mathrm{core}}\right\}.
\end{equation}
Its width is
\[
\max\!\left\{0,\lambda_{\mathrm{core}}-
\max\!\left(0,1-\frac{1-\lambda}{w}\right)\right\}.
\]
The width increases as the lower endpoint decreases but saturates when that
endpoint reaches zero; unlike the sampling latency of
Remark~\ref{rem:core-certification-cost}, it does not globally scale as
$1/w$.

\paragraph{Varying the protected share.}
At event times $t_n=n$, trigger $A$ occurs every three events. Conditional
on $A$, protected predicate $C$ occurs independently with probability $w$.
We test $w=0.01$, $0.03$, $0.05$, and $0.07$. Candidate response $B$ is checked at the
next event, with
$\lambda=\lambda_{\mathrm{core}}=0.9$. Every protected response fails and
every non-protected response succeeds, so the aggregate probability is
$1-w\ge0.93$. A drift declaration is supplied exogenously to isolate the
certification mechanism. A frozen random-forest proposer, trained on 4,000
independent synthetic tasks, selects between the admissible full trigger and
its core-only restriction from a separate prefix. Its features contain the
aggregate empirical frequency, trigger coverage, threshold, and structural
indicators, but no protected-group outcome frequency. Aggregate-only and joint certification see the same 200
streams at each $w$, up to 45,000 event indices. The proposer selects the
full-trigger candidate in all 200 independent prefixes at every tested value
of $w$.

This constructed masked-core setting makes the full trigger attractive in
aggregate, so it is not by itself an informative accuracy test for the
proposer. We therefore audit the same frozen model on 500 independent,
deliberately closer four-candidate ranking tasks, varying both adaptive
trigger retention and an adaptive response conjunct. Table~\ref{tab:rq2-proposer-audit}
reports 65 ranking errors and mean true-margin regret $0.0035$. These errors
make the proposer genuinely fallible while remaining separate from the
certification samples; proposer accuracy is descriptive and is not an
assumption of the guarantee.

\begin{table}[t]
\centering
\small
\caption{Independent ranking audit of the frozen RQ2 proposer.}
\label{tab:rq2-proposer-audit}
\begin{tabular}{@{}rrrr@{}}
\toprule
Tasks & Exact rankings & Ranking errors & Mean margin regret \\
\midrule
500 & 435 & 65 & 0.0035 \\
\bottomrule
\end{tabular}
\end{table}

\begin{figure}[t]
\centering
\includegraphics[width=\linewidth]{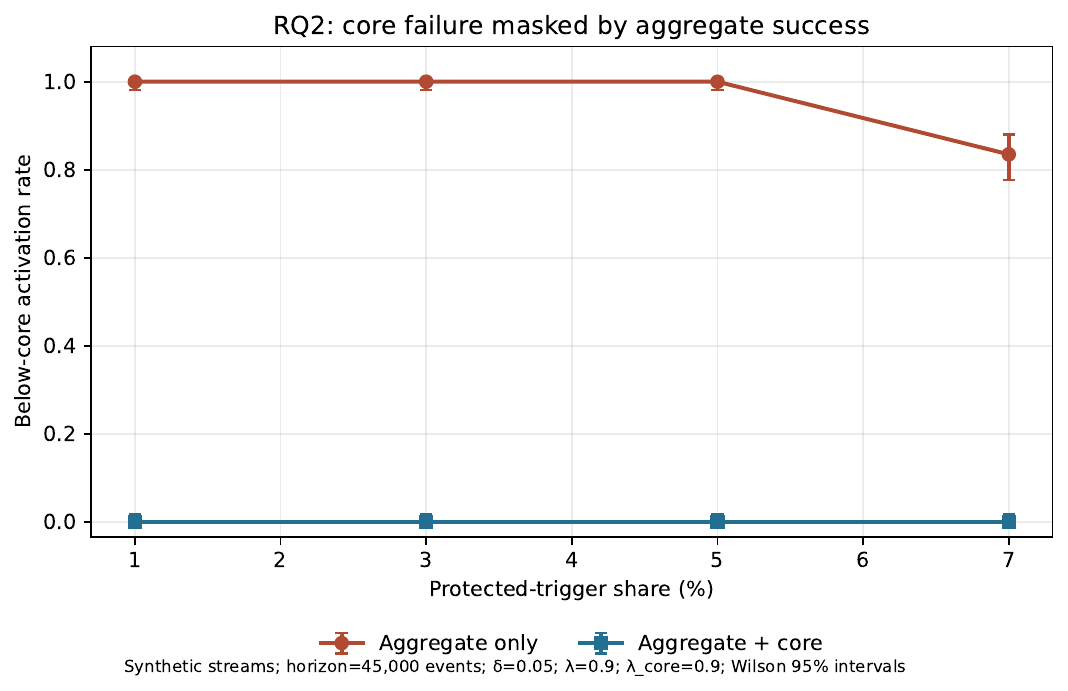}
\caption{Activations within the finite horizon when protected responses
always fail. Each point uses 200 paired streams; bars show Wilson 95\%
intervals.}
\Description{Core-unsupported activation rate against protected-trigger share. The aggregate-only curve is high, whereas the aggregate-plus-core curve remains at zero.}
\label{fig:rq2-masked-core}
\end{figure}
\begin{table}[t]
\centering
\small
\caption{RQ2 masked-core experiment with proposals selected by the frozen supervised proposer. A below-core activation occurs while $p^{\mathrm{core}}<\lambda_{\mathrm{core}}$; brackets give Wilson 95\% intervals.}
\label{tab:rq2-masked-core}
\begin{tabular}{@{}lrrrr@{}}
\toprule
$w$ & Certification rule & Below core & Wilson 95\% & Median event \\
\midrule
0.01 & Aggregate only & 200/200 & [98.1, 100.0] & 3,305 \\
0.01 & Aggregate + core & 0/200 & [0.0, 1.9] & -- \\
0.03 & Aggregate only & 200/200 & [98.1, 100.0] & 5,795 \\
0.03 & Aggregate + core & 0/200 & [0.0, 1.9] & -- \\
0.05 & Aggregate only & 200/200 & [98.1, 100.0] & 12,329 \\
0.05 & Aggregate + core & 0/200 & [0.0, 1.9] & -- \\
0.07 & Aggregate only & 167/200 & [77.7, 88.0] & 36,158 \\
0.07 & Aggregate + core & 0/200 & [0.0, 1.9] & -- \\
\bottomrule
\end{tabular}
\end{table}

Figure~\ref{fig:rq2-masked-core} and
Table~\ref{tab:rq2-masked-core} show aggregate-only activation in every run
for $w\le0.05$ and in 167/200 runs for $w=0.07$. joint certification never
activates because its core lower bound remains zero. The 33 aggregate-only
runs censored at $w=0.07$ reflect the smaller finite-horizon margin
$1-w-\lambda=0.03$, not absence of the masked-core phenomenon.

\paragraph{Varying protected success.}
We next fix $w=0.2$ and test $p^{\mathrm{core}}$ at $0$, $0.5$, $0.8$,
$0.88$, $0.92$, $0.95$, and $0.99$, using 100 paired replications and a
horizon of 150,000 event indices. Equation~\eqref{eq:masked-window} gives
the open theoretical interval $(0.5,0.9)$.

\begin{figure}[t]
\centering
\includegraphics[width=\linewidth]{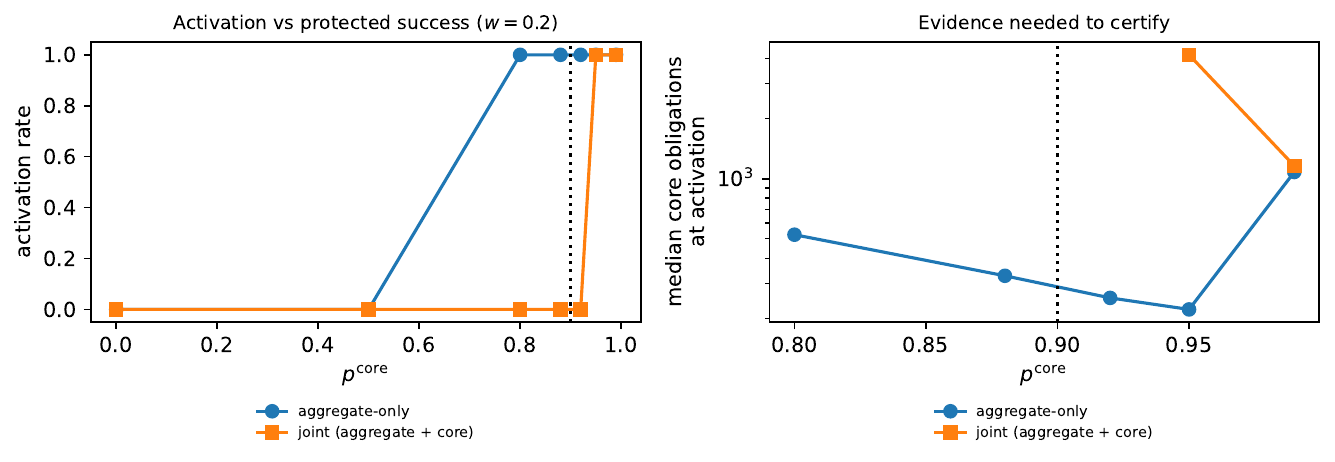}
\caption{Activation rates against protected success at $w=0.2$ (left) and
median completed protected obligations at joint activation (right). The
dotted line is $\lambda_{\mathrm{core}}$.}
\Description{Two plots show activation rates and median core evidence as the protected success probability increases.}
\label{fig:rq2-gradient}
\end{figure}
\begin{table}[t]
\centering
\small
\caption{RQ2 protected-success sweep at $w=0.20$. A dash means that no run activated within the horizon.}
\label{tab:rq2-gradient}
\begin{tabular}{@{}lrrrr@{}}
\toprule
$p^{\mathrm{core}}$ & Aggregate-only activations & Joint activations & Joint median core $q$ & Joint censored \\
\midrule
0.00 & 0/100 & 0/100 & -- & 100 \\
0.50 & 0/100 & 0/100 & -- & 100 \\
0.80 & 100/100 & 0/100 & -- & 100 \\
0.88 & 100/100 & 0/100 & -- & 100 \\
0.92 & 100/100 & 0/100 & -- & 100 \\
0.95 & 100/100 & 100/100 & 4,177 & 0 \\
0.99 & 100/100 & 100/100 & 1,160 & 0 \\
\bottomrule
\end{tabular}
\end{table}

Figure~\ref{fig:rq2-gradient} and
Table~\ref{tab:rq2-gradient} report the activation frequencies and the
completed core sample sizes across the protected-success sweep.
At $0.8$ and $0.88$, inside the window, aggregate-only certification
activates in 100/100 runs and joint certification in none. The value $0.5$ is the
aggregate boundary and neither rule activates. The value $0.92$ is outside
the masked-core region and is genuinely core-compliant, but remains censored
under joint certification at this finite horizon, whose Hoeffding bound is conservative near the threshold. At $0.95$ and
$0.99$, joint certification activates in every run, after medians of 4,177 and
1,160 protected obligations respectively. Thus censoring near the threshold
is a finite-horizon liveness effect, not evidence that the candidate
violates the protected requirement.

\subsection{RQ3: sequential governed evolution}
\label{sec:rq3-experiment}

\paragraph{Question.}
Does joint certification continue to prevent masked-core activations when an
AI proposer repeatedly revises the specification over one evolving history?

The third experiment exercises the complete loop. The two-window detector
of Section~\ref{sec:drift} initiates proposal; the frozen supervised proposer
ranks the formally admissible neighbourhood; and the governor applies
$\delta_j=\delta/2^{j+1}$ across successive transitions. Candidate generation
changes at most one adaptive structural component per transition. Open
obligations retain the trigger, response, and temporal parameters of their
origin version.

Across four regimes the protected success probability follows
$0.99\to0.97\to0.80\to0.55$, while non-protected responses remain easy.
We use $w=0.3$, horizon 120,000,
$\lambda_{\mathrm{core}}=0.9$, parameter envelope
$\{0.95,0.92,0.90\}$, and 100 paired histories. The proposer predicts each
candidate's aggregate margin from prefix-level aggregate features and never
receives the protected success frequency. Training data, model seed, and
simulation seeds are fixed independently of the certification samples.

\begin{figure}[t]
\centering
\includegraphics[width=\linewidth]{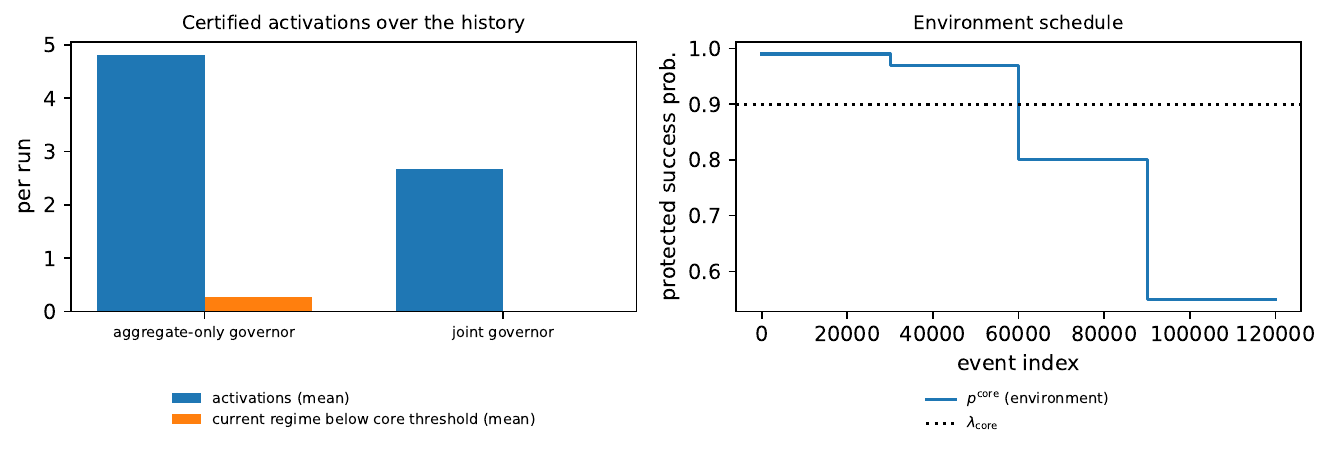}
\caption{Sequential experiment. Left: mean certified activations and
activations occurring while the synthetic current regime is below the core
threshold. Right: protected-success schedule.}
\Description{Aggregate-only certification activates in below-core regimes; joint certification does not. The protected success schedule decreases across four regimes.}
\label{fig:rq3-endtoend}
\end{figure}
\begin{table}[t]
\centering
\small
\caption{RQ3 sequential experiment with the frozen supervised AI proposer. `Below core' counts runs with at least one activation during a regime where $p^{\mathrm{core}}<\lambda_{\mathrm{core}}$.}
\label{tab:rq3-sequential}
\resizebox{\linewidth}{!}{%
\begin{tabular}{@{}lrrrr@{}}
\toprule
Certification rule & Mean activations & Mean proposals & Below core & Sample-target failure \\
\midrule
Aggregate only & 4.82 & 5.82 & 27/100 & 0/100 \\
Aggregate + core & 2.68 & 3.68 & 0/100 & 0/100 \\
\bottomrule
\end{tabular}}
\end{table}

Figure~\ref{fig:rq3-endtoend} and
Table~\ref{tab:rq3-sequential} summarize the activation counts and the
protected-success schedule across the paired histories.
Aggregate-only certification produces 4.82 activations per history on
average and activates while the current regime is below the protected
threshold in 27/100 histories. joint certification produces 2.68 activations
on average and never does so in these runs. This is the masked-core mechanism
operating across a sequence rather than at one decision.

The two diagnostics in Table~\ref{tab:rq3-sequential} must not be conflated.
The lifetime theorem controls activation when the average predictable mean
of the certification sample is below its threshold; no such sample-target
failure was observed under either certification rule. Whether the synthetic current
regime is below the core threshold is a different operational diagnostic,
because a certification sample may cross a regime boundary. The experiment
therefore illustrates repeated governed decisions under the theorem's
schedule; it neither estimates the lifetime error probability nor proves
future operational validity.

\subsection{Monitored clinical-alarm case study and runtime cost}
\label{sec:clinical-case-study}

This case study instantiates the clinical-alarm scenario of
Section~\ref{sec:introduction} with an executable bounded-response monitor,
rather than sampling obligation outcomes directly. Proposition $A$ denotes a
desaturation alarm, $C$ the protocol-critical subset, and $B$ a documented
intervention. Alarm origins are separated by more than the largest response
horizon so that each intervention is attributable to one obligation. The
trace contains 30,000 alarms; protected alarms occur with probability $0.1$,
and full-response probabilities are $0.60$ and $0.99$ for protected and
non-protected alarms, respectively.

The executable monitor evaluates the incumbent and candidate formulas
\[
\psi_0=A\Rightarrow_{0.9,[1,3]}B,
\qquad
\psi^\star=A\Rightarrow_{0.9,[1,8]}B,
\]
with designer-protected trigger $A\land C$ and protected response $B$.

Starting from the incumbent window $[1,3]$, the frozen supervised proposer
ranks an admissible $[1,8]$ revision using an independent 1,000-obligation
prefix. A second proposed revision that removes the protected response $B$
is rejected by structural admissibility before statistical ranking. The
selected candidate is then monitored prospectively on the Boolean timed
trace. Table~\ref{tab:clinical-case} reports the resulting certificates and
the median of nine reference runs of the executable monitor.

\begin{table}[t]
\centering
\small
\caption{Monitored clinical-alarm case study and reference runtime.}
\label{tab:clinical-case}
\begin{tabularx}{\linewidth}{@{}Xr@{}}
\toprule
Quantity & Result \\
\midrule
Completed alarm obligations & 30,000 \\
Observed protected share & 0.099 \\
Candidate aggregate success & 0.953 \\
Candidate protected success & 0.608 \\
Aggregate-only activation sample & 3,321 \\
Joint activation & not activated \\
Pending outcome: origin / retroactive & 1 / 0 \\
Baseline / governed median runtime & 72.0 / 217.6 ms \\
Baseline / governed time per event & 0.240 / 0.725 $\mu$s \\
Baseline / governed peak memory & 30.9 / 60.3 KiB \\
Baseline / governed peak pending & 1 / 2 \\
Parallel-governor increment & 145.6 ms \\
Reference runtime ratio & 3.02$\times$ \\
\bottomrule
\end{tabularx}
\end{table}

The widened candidate has aggregate success $0.953$ but protected success
$0.608$. Aggregate-only certification activates after 3,321 completed
post-selection obligations; joint certification never activates because the
core conjunct remains unsupported. A pending alarm whose intervention occurs
six events after its origin succeeds under its origin window $[1,8]$ but
would fail under a retroactively substituted $[1,3]$ window, producing the
reported $1/0$ outcome pair and exercising
Theorem~\ref{thm:origin-nonretroactivity}. The reference timings in the table
compare the non-adaptive incumbent monitor with concurrent incumbent and
candidate monitors plus group classification, sequential lower bounds, and
the activation check. The governed run uses $0.725$ rather than $0.240$
microseconds per trace event, peaks at $60.3$ rather than $30.9$ KiB of
Python-tracked allocations, and has at most two rather than one pending
obligations because the two monitors run in parallel. Absolute timings are
machine-dependent; the artifact regenerates all measurements and retains the
obligation counts and decisions as reproducible outputs.

The case study therefore exercises the actual obligation interface and both
main failure mechanisms. It remains semi-synthetic: its probabilities encode
a realistic alarm workflow but are not estimates from patient data.

\subsection{Artifact and reproducibility statement}
\label{sec:artifact-statement}

The reproducibility artifact is publicly available at
\url{https://github.com/ruggerolanotte/protected-cores-artifact}.
The artifact contains one Python program and requires Python 3.10+, NumPy,
SciPy, Matplotlib, and scikit-learn. The command consists of
\texttt{python scripts/fac\_experiments.py}, followed by one of
\texttt{rq1}, \texttt{rq2}, \texttt{rq3}, or \texttt{case}, and
\texttt{--out DIR}; these targets
regenerate every manuscript table and figure; the \texttt{all} target runs
the complete suite and \texttt{all --quick} performs an interface smoke test.
All random seeds, model-training sizes, horizons, and confidence allocations
are command-line parameters with manuscript defaults. RQ2 and RQ3 archive the
paired per-run records, while the case study archives its monitored counts,
activation indices, pending-obligation regression result, and reference
timings.

\section{Conclusion}

We presented a framework for the certified evolution of temporal
specifications under an untrusted AI proposer. The proposer suggests
parametric or structural revisions; a symbolic governor checks their
admissibility and requires post-selection evidence before activation.

Two results organize the framework. Origin-version semantics prevents a
revision from changing the outcome of obligations generated under an
earlier version. The masked-core counterexample shows that aggregate
certification can admit a candidate that fails systematically on protected
triggers; adding core-conditional certification addresses this failure
mode. A protected-envelope invariant supports admissible revisions, while
a proposer-independent lifetime theorem bounds erroneous certifications
relative to the predictable means of completed samples, including when
aggregate and protected samples overlap.

Controlled experiments and a monitored clinical-alarm case study use a
frozen supervised proposer to illustrate the masked-core failure at one
decision and across repeated governed transitions, while also measuring the
cost of prospective certification. Section~\ref{sec:activation-scope} states the
condition under which a certificate transfers to obligations generated after
activation, and records that it must be assumed rather than inferred: two
laws may agree on the entire observed prefix and differ afterwards. That
condition is not a third activation guarantee, and the experiments do not
infer it from a finite prefix.

\begin{acks}
The authors acknowledge the use of ChatGPT (OpenAI) during the preparation of this manuscript to assist with manuscript organization, preliminary reformulation of selected definitions and technical arguments, and language revision. All technical content, proofs, and references were independently reviewed and verified by the authors, who take full responsibility for the final manuscript.
\end{acks}

\bibliographystyle{ACM-Reference-Format}
\bibliography{references}

\appendix
\section{Proofs}
\label{app:proofs}

\subsection{Proof of Theorem~\ref{thm:core-envelope-preserving}}
\label{app:proof-thm-core-envelope-preserving}
\begin{proof}
The claim is immediate for $i=0$. Suppose it holds for $\psi_i$ and
$\psi_i\mathrel{\mathcal R}\psi_{i+1}$. By
Definition~\ref{def:admissible-revision}, the two protected components are
copied unchanged into $\psi_{i+1}$ and its parameter tuple remains in
$\mathcal A_{\mathrm{par}}$. Moreover,
$\operatorname{trig}(\psi_{i+1})
=
\varphi_1^{\mathrm{core}}
\lor
\varphi_{1,i+1}^{\mathrm{adapt}}$,
so every occurrence of the protected trigger satisfies the derived trigger,
and
$\operatorname{resp}(\psi_{i+1})
=
\varphi_2^{\mathrm{core}}
\land
\varphi_{2,i+1}^{\mathrm{adapt}}$,
so every successful derived response satisfies the protected response.
Induction completes the proof.
\end{proof}

\subsection{Proof of Theorem~\ref{thm:origin-nonretroactivity}}
\label{app:proof-thm-origin-nonretroactivity}
\begin{proof}
The trigger is evaluated as
$\mathbf e,k\models\operatorname{trig}(\psi_j)$.
When it holds, the outcome is
\[
X_k^{\psi_j}
=
\mathbf 1\!\left[
\exists m\ge k:
t_m-t_k\in[a_j,b_j]
\land
\mathbf e,m\models\operatorname{resp}(\psi_j)
\right].
\]
The completion index is $c_{\psi_j}(k)$ as defined in the statement; it is
finite because $H(\psi_j)<\infty$ and timed streams are non-Zeno.
All three expressions use the fixed origin specification
$\psi_j$ and the same timed stream $\mathbf e$; none depends
on a subsequent revision. 
Each origin index has a unique active
specification by the activation-interval convention, which establishes
unique attribution as well.
\end{proof}

\subsection{Proof of Proposition~\ref{prop:masked-core-failure}}
\label{app:proof-prop-masked-core-failure}

\begin{proof}
Let $A,C,B\in\AP$, and consider the protected decomposition
\[
\varphi_1^{\mathrm{core}}
=
A\land C,
\qquad
\varphi_1^{\mathrm{adapt}}
=
A\land\neg C,
\qquad
\varphi_2^{\mathrm{core}}
=
B,
\qquad
\varphi_2^{\mathrm{adapt}}
=
\top.
\]
Take
\[
\psi^\star
:=
\bigl((A\land C)\lor(A\land\neg C)\bigr)
\Rightarrow_{0.9,[a,b]}
(B\land\top).
\]
Then
\[
\operatorname{trig}(\psi^\star)
=
(A\land C)\lor(A\land\neg C)
=
A,
\qquad
\operatorname{resp}(\psi^\star)
=
B.
\]
Thus each obligation is generated exactly when $A$ holds. We may therefore
restrict the remainder of the argument to the sequence of $A$-triggered
obligations. Let the designer-fixed protected threshold be
$\lambda_{\mathrm{core}}=0.9$.

We now construct an explicit timed-stream law. Place the origins of
$A$-triggered obligations at deterministic times separated by more than
$H(\psi^\star)$. This prevents the response assigned to one obligation from
affecting any other obligation. Choose regime $R_0$ to contain a nonempty
finite collection of such obligations, each completed within $R_0$, and
let $R_1$ contain an infinite sequence of completed obligations with origins
$k_1<k_2<\cdots$.

For every $A$-triggered obligation in either regime, independently draw
$G_s\sim\operatorname{Bern}(0.01)$
and set $C$ true at its origin if and only if $G_s=1$. Hence
\[
G_s
=
\begin{cases}
1, & \text{for a protected trigger }A\land C,\\
0, & \text{for an adaptive-only trigger }A\land\neg C,
\end{cases}
\]
and protected triggers have conditional share
\[
\Pr(G_s=1)
=
\Pr\!\left(
\mathbf e,k_s\models C
\mid
\mathbf e,k_s\models A
\right)
=
0.01.
\]

In regime $R_0$, independently of the group indicators, draw
$Y_s\sim\operatorname{Bern}(0.99)$
and make $B$ true within the response window of obligation $s$ if and only
if $Y_s=1$. Consequently,
\[
X_{k_s}^{\psi^\star}=Y_s
\]
in $R_0$, independently of whether the trigger is protected or
adaptive-only. Therefore,
\[
\Pr_{R_0}\!\left(
X_{k_s}^{\psi^\star}=1\mid G_s=1
\right)
=
\Pr_{R_0}\!\left(
X_{k_s}^{\psi^\star}=1\mid G_s=0
\right)
=
0.99.
\]
It follows that
\[
p_{R_0}^{\mathrm{core}}(\psi^\star)
=
0.99
\qquad\text{and}\qquad
p_{R_0}(\psi^\star)
=
0.01\cdot0.99
+
0.99\cdot0.99
=
0.99.
\]

In regime $R_1$, use the same group law but make $B$ true within the
response window of obligation $s$ if and only if $G_s=0$. Equivalently,
\[
X_{k_s}^{\psi^\star}
=
1-G_s.
\]
Thus every protected-trigger obligation fails and every adaptive-only
obligation succeeds:
\[
\Pr_{R_1}\!\left(
X_{k_s}^{\psi^\star}=1\mid G_s=1
\right)
=
0,
\qquad
\Pr_{R_1}\!\left(
X_{k_s}^{\psi^\star}=1\mid G_s=0
\right)
=
1.
\]
Hence
\[
p_{R_1}^{\mathrm{core}}(\psi^\star)
=
0,
\]
whereas the aggregate probability remains
\[
p_{R_1}(\psi^\star)
=
0.01\cdot0
+
0.99\cdot1
=
0.99
=
p_{R_0}(\psi^\star).
\]

Suppose now that candidate selection occurs before any outcome generated in
$R_1$ is observed. Since the variables $(G_s)_{s\ge1}$ are independent
Bernoulli variables with parameter $0.01$, the aggregate outcomes
\[
\bigl(X_{k_s}^{\psi^\star}\bigr)_{s\ge1}
=
(1-G_s)_{s\ge1}
\]
are independent Bernoulli variables with parameter $0.99$. By the strong
law of large numbers,
\[
\widehat p_q
=
\frac{1}{q}
\sum_{s=1}^{q}
X_{k_s}^{\psi^\star}
\longrightarrow
0.99
\qquad\text{almost surely}.
\]
For every fixed aggregate error allocation
$\delta_j^{\mathrm{all}}>0$, the radius of the aggregate lower bound
satisfies
\[
\sqrt{
\frac{
\log\!\bigl(
\pi^2q^2/(6\delta_j^{\mathrm{all}})
\bigr)
}{
2q
}}
\longrightarrow
0.
\]
Consequently,
\[
L_{\tau_q}>0.9
\]
for all sufficiently large $q$, almost surely. Aggregate-only
certification therefore eventually certifies $\psi^\star$ almost surely.

Because
$\Pr(G_s=1)=0.01>0$, protected triggers occur infinitely often almost
surely. Every outcome in the resulting core subsequence equals zero.
Therefore, for every nonempty core sample,
\[
\widehat p_q^{\mathrm{core}}
=
0,
\qquad
L_{\tau_q^{\mathrm{core}}}^{\mathrm{core}}
=
0
<
\lambda_{\mathrm{core}}.
\]
Before the first protected-trigger obligation is completed,
$N_m^{\mathrm{core}}=0$, so the nonempty-core-sample requirement of
Definition~\ref{def:statistical-certification} fails. After the first such
completion, the core lower bound remains zero at every event index.
Hence the joint aggregate and core-conditional criterion never certifies
$\psi^\star$.
\end{proof}

\subsection{Proof of Proposition~\ref{prop:anytime-lower-bound}}
\label{app:proof-prop-anytime-lower-bound}
\begin{proof}
Fix an enabled $g$ and condition on $\mathcal H_j$, which fixes the
candidate, protected-trigger predicate, and budget. All selected obligations
share the finite horizon $H^\star$; selected origin indices are increasing,
and timestamps are non-decreasing. Completion by event index $m$ of a
selected obligation therefore implies completion of every earlier selected
obligation in the same group. Consequently the outcomes available at $m$ are
exactly
$X_{k_1^{(g)}},\ldots,X_{k_{N_m^{(g)}}^{(g)}}$.

By definition of the completed-outcome filtration and conditional mean,
\[
\mathbb E\!\left[
X_{k_s^{(g)}}-p_s^{(g)}
\mid\mathcal G_{j,s-1}^{(g)}
\right]=0,
\qquad
X_{k_s^{(g)}}-p_s^{(g)}
\in[-p_s^{(g)},1-p_s^{(g)}].
\]
The predictable interval has width one. Conditional Hoeffding for martingale
differences therefore gives, for every fixed $q\ge1$,
\[
\Pr\!\left(
\widehat p_q^{(g)}-\bar p_q^{(g)}
>
\sqrt{\frac{\log(1/\delta_{j,q}^{(g)})}{2q}}
\;\middle|\;\mathcal H_j
\right)
\le\delta_{j,q}^{(g)}.
\]
A union bound over $q\ge1$ costs
$\sum_{q\ge1}\delta_{j,q}^{(g)}=\delta_j^{(g)}$. On the complementary
event the inequality holds for every $q$ simultaneously.

For an enabled group $g$, let
\[
\mathcal I_g
:=
\bigcap_{q\ge1}
\left\{
\bar p_q^{(g)}
\ge
\widehat p_q^{(g)}
-
\sqrt{
\frac{\log\!\bigl(1/\delta_{j,q}^{(g)}\bigr)}
     {2q}
}
\right\}.
\]
The preceding argument gives
$\Pr(\mathcal I_g\mid\mathcal H_j)
\ge
1-\delta_j^{(g)}$.

Fix a sample path in $\mathcal I_g$ and an event index $m>n^\star$ with
$N_m^{(g)}>0$. Set $q=N_m^{(g)}$ on that sample path. By
Equation~\eqref{eq:initial-segment}, the outcomes available at $m$ are
exactly the first $q$ outcomes of group $g$. Hence
\[
\bar p_{N_m^{(g)}}^{(g)}
\ge
\widehat p_{N_m^{(g)}}^{(g)}
-
\sqrt{
\frac{
\log\!\bigl(1/\delta_{j,N_m^{(g)}}^{(g)}\bigr)
}{
2N_m^{(g)}
}
}.
\]
Moreover, $\bar p_{N_m^{(g)}}^{(g)}\ge0$ because it is an average of
conditional probabilities. Therefore,
\[
\bar p_{N_m^{(g)}}^{(g)}
\ge
\max\left\{
0,\,
\widehat p_{N_m^{(g)}}^{(g)}
-
\sqrt{
\frac{
\log\!\bigl(1/\delta_{j,N_m^{(g)}}^{(g)}\bigr)
}{
2N_m^{(g)}
}
}
\right\}
=
L_m^{(g)}.
\]
Since the sample path and $m$ were arbitrary, the inequality holds
simultaneously for every $m>n^\star$ with $N_m^{(g)}>0$. This pathwise
substitution requires no predictability of future origin indices.
\end{proof}

\subsection{Proof of Proposition~\ref{prop:joint-core-coverage}}
\label{app:proof-prop-joint-core-coverage}
\begin{proof}
For each enabled $g$, Proposition~\ref{prop:anytime-lower-bound} bounds
its conditional failure probability by $\delta_j^{(g)}$.
A union bound over the at most two enabled groups gives
$\delta_j^{\mathrm{all}}+\delta_j^{\mathrm{core}}\le\delta_j$.
The two groups may overlap; independence is not used.
\end{proof}

\subsection{Proof of
Corollary~\ref{cor:joint-regime-level-certification}}
\label{app:proof-cor-joint-regime-level-certification}

\begin{proof}
Let $\mathcal C_j$ denote the simultaneous coverage event of
Proposition~\ref{prop:joint-core-coverage}. Conditionally on
$\mathcal H_j$,
$\Pr(\mathcal C_j\mid\mathcal H_j)
\ge
1-\delta_j$.

Fix a sample path in $\mathcal C_j$ on which $n^\star<m$ and
$\psi^\star$ is statistically certified at $m$. By
Definition~\ref{def:statistical-certification}, for every enabled group
$g$,
$N_m^{(g)}>0$ and
$L_m^{(g)}\ge\lambda_g$.
Because the sample path belongs to $\mathcal C_j$,
\[
\bar p_{N_m^{(g)}}^{(g)}
\ge
L_m^{(g)}
\ge
\lambda_g
\]
for every enabled group $g$.

For every
$s=1,\ldots,N_m^{(g)}$, the event
$\{n^\star<m,\;N_m^{(g)}\ge s\}$ holds on this sample path. The alignment
assumption therefore gives
$p_s^{(g)}
=
p_{R_r}^{(g)}(\psi^\star)$
almost surely. Consequently,
\[
\bar p_{N_m^{(g)}}^{(g)}
=
\frac{1}{N_m^{(g)}}
\sum_{s=1}^{N_m^{(g)}}p_s^{(g)}
=
p_{R_r}^{(g)}(\psi^\star).
\]
Hence, on $\mathcal C_j$, statistical certification at $m$ implies
$p_{R_r}^{(g)}(\psi^\star)
\ge
\lambda_g$
for every enabled group $g$.

It follows that the event
\[
\left\{
\begin{array}{c}
n^\star<m,\;
\psi^\star\text{ is statistically certified at }m,\\
\exists\,g\in\{\mathrm{all},\mathrm{core}\}\text{ enabled}:
p_{R_r}^{(g)}(\psi^\star)<\lambda_g
\end{array}
\right\}
\]
is contained, up to a null set, in $\mathcal C_j^c$. Therefore,
conditionally on $\mathcal H_j$, its probability is at most
$\delta_j$.
\end{proof}

\subsection{Proof of Theorem~\ref{thm:lifetime-certification}}
\label{app:proof-thm-lifetime-certification}
\begin{proof}
Fix $j$. If the version is never reached or $n^\star=\infty$,
there is no activation and $F_j=\varnothing$. On paths with finite selection time, use the simultaneous aggregate
and (when enabled) core coverage event. Define it globally by
\[
\mathcal C_j
=\{n^\star=\infty\}
\cup\left\{n^\star<\infty\;\middle|\;
\begin{array}{l}
\forall m>n^\star\text{ with }N_m>0:\\
\qquad
\bar p_{N_m}
\ge L_m;\\[1mm]
\text{and, if core certification is enabled,}\\
\forall m>n^\star\text{ with }N_m^{\mathrm{core}}>0:\\
\qquad
\bar p^{\mathrm{core}}_{N_m^{\mathrm{core}}}
\ge L_m^{\mathrm{core}}
\end{array}
\right\}.
\]
The displayed inequalities are required only for their respective
nonempty samples. Thus $\mathcal C_j$ holds automatically when no
candidate is selected, while its core condition is omitted only when
core certification is disabled. By
Proposition~\ref{prop:joint-core-coverage},
\[
\mathbf 1_{\{n^\star<\infty\}}
\Pr(\mathcal C_j^c\mid\mathcal H_j)
\le
\mathbf 1_{\{n^\star<\infty\}}\delta_j
\qquad\text{a.s.}
\]

If activation occurs, its first-crossing definition gives
$m_j^{\mathrm{cert}}=n_{j+1}-1$, nonempty aggregate and, when enabled,
core samples, and both lower bounds at least their respective
thresholds. Since $\mathcal C_j$ holds at \emph{every} eligible
sample size, it holds in particular at this random decision index.
Consequently, on $\mathcal C_j$ neither
$F_j^{\mathrm{all}}$ nor $F_j^{\mathrm{core}}$ can occur, so
\[
F_j\subseteq
\{n^\star<\infty\}\cap\mathcal C_j^c.
\]
The selection indicator and the budget $\delta_j$ are
$\mathcal H_j$-measurable. Conditional expectation therefore gives
\[
\Pr(F_j)
\le \mathbb E\!\left[
\mathbf 1_{\{n^\star<\infty\}}
\Pr(\mathcal C_j^c\mid\mathcal H_j)
\right]
\le\mathbb E\!\left[
\mathbf 1_{\{n^\star<\infty\}}\delta_j
\right]
\le\mathbb E[\delta_j].
\]
Finally, countable subadditivity, Tonelli's theorem, and the
almost-sure lifetime budget give
\[
\Pr\!\left(\bigcup_{j=0}^{\infty}F_j\right)
\le\sum_{j=0}^{\infty}\Pr(F_j)
\le\sum_{j=0}^{\infty}\mathbb E[\delta_j]
=\mathbb E\!\left[\sum_{j=0}^{\infty}\delta_j\right]
\le\delta.
\]
No independence between attempts, versions, or the two
certification samples is used.
\end{proof}

\end{document}